% This must be in the first 5 lines to tell arXiv to use pdfLaTeX, which is strongly recommended.
\pdfoutput=1
% In particular, the hyperref package requires pdfLaTeX in order to break URLs across lines.

\documentclass[11pt]{article}

% Change "review" to "final" to generate the final (sometimes called camera-ready) version.
% Change to "preprint" to generate a non-anonymous version with page numbers.
\usepackage[review]{acl}

% Standard package includes
\usepackage{times}
\usepackage{latexsym}

% For proper rendering and hyphenation of words containing Latin characters (including in bib files)
\usepackage[T1]{fontenc}
% For Vietnamese characters
% \usepackage[T5]{fontenc}
% See https://www.latex-project.org/help/documentation/encguide.pdf for other character sets

% This assumes your files are encoded as UTF8
\usepackage[utf8]{inputenc}

% This is not strictly necessary, and may be commented out,
% but it will improve the layout of the manuscript,
% and will typically save some space.
\usepackage{microtype}

% This is also not strictly necessary, and may be commented out.
% However, it will improve the aesthetics of text in
% the typewriter font.
\usepackage{inconsolata}

%Including images in your LaTeX document requires adding
%additional package(s)
\usepackage{graphicx}

% If the title and author information does not fit in the area allocated, uncomment the following
%
%\setlength\titlebox{<dim>}
%
% and set <dim> to something 5cm or larger.

\usepackage{microtype}
\microtypecontext{spacing=nonfrench}

\usepackage{hyperref}
\usepackage{longtable}
\usepackage{arydshln}
\usepackage{bm}     
\usepackage{booktabs}
\usepackage{array}
\usepackage{makecell}
\usepackage{amsmath, amssymb, amsthm}
\usepackage{cleveref}  
\usepackage{amsmath, etoolbox}

\makeatletter
\patchcmd{\gathered@d@t@}
  {\tagform@{\thetag@form@}\fi}
  {%
    \setbox\z@\hbox{\tagform@{\thetag@form@}}%
    \dimen@=\ht\z@ \advance\dimen@ by \dp\z@
    \rlap{\raisebox{0.5\dimen@}{\tagform@{\thetag@form@}}}%
  }{}{}
\makeatother
\newtheorem{definition}{Definition}
\newtheorem{theorem}{Theorem}

\usepackage{algorithm}
\usepackage{algpseudocode}
\usepackage{ragged2e} 
\usepackage{graphicx}
\usepackage{xspace}
\usepackage{multirow}
\usepackage{pifont} 
\usepackage{colortbl}
\PassOptionsToPackage{table}{xcolor} 
\usepackage{xcolor}
\usepackage{comment}
\newcommand{\modelname}{\textsc{Greater}\xspace}
\newcommand{\attackname}{\textsc{Greater-A}\xspace}
\newcommand{\defensename}{\textsc{Greater-D}\xspace}

\newcommand{\eg}{\emph{e.g.},\xspace}

\newcommand{\etc}{etc.\xspace}

\newcommand\secref[1]{\S\ref{#1}}

\newcommand{\fakeparagraph}[1]
{\vspace{1mm}\noindent\textbf{#1}}

 %need to discuss
 %revise of the text from Yuanfan Li
 %revise of the text from czx-Li

 %revise of the text from Xiaoming Liu

\title{\Large Iron Sharpens Iron: Defending Against Attacks in Machine-Generated Text Detection with Adversarial Training}

% Author information can be set in various styles:
% For several authors from the same institution:
% \author{Author 1 \and ... \and Author n \\
%         Address line \\ ... \\ Address line}
% if the names do not fit well on one line use
%         Author 1 \\ {\bf Author 2} \\ ... \\ {\bf Author n} \\
% For authors from different institutions:
% \author{Author 1 \\ Address line \\  ... \\ Address line
%         \And  ... \And
%         Author n \\ Address line \\ ... \\ Address line}
% To start a separate ``row'' of authors use \AND, as in
% \author{Author 1 \\ Address line \\  ... \\ Address line
%         \AND
%         Author 2 \\ Address line \\ ... \\ Address line \And
%         Author 3 \\ Address line \\ ... \\ Address line}

\author{Yuanfan Li\textsuperscript{$1,\dagger$}, Zhaohan Zhang\textsuperscript{$2,\dagger$}, Chengzhengxu Li\textsuperscript{$1,\dagger$},  Chao Shen\textsuperscript{$1$}, Xiaoming Liu\textsuperscript{$1,\ast$} \\
        \textsuperscript{1}Faculty of Electronic and Information Engineering, Xi'an Jiaotong University\\ 
        %\\ No.28, Xianning West Road, Xi'an, China\\
        \textsuperscript{2}Queen Mary University of London \\
        \textsuperscript{$\dagger$} Equal contribution, \textsuperscript{$\ast$} Corresponding author\\
        \texttt{
        %\textsuperscript{$\ast$} Corresponding author, 
        liyuan7716@gmail.com, czx.li@stu.xjtu.edu.cn
        }\\
        \texttt{
        zhaohan.zhang@qmul.ac.uk, \{chaoshen, xm.liu\}@xjtu.edu.cn 
        }\\
        %\textsuperscript{$\ast$} Corresponding author
        }

%\author{
%  \textbf{First Author\textsuperscript{1}},
%  \textbf{Second Author\textsuperscript{1,2}},
%  \textbf{Third T. Author\textsuperscript{1}},
%  \textbf{Fourth Author\textsuperscript{1}},
%\\
%  \textbf{Fifth Author\textsuperscript{1,2}},
%  \textbf{Sixth Author\textsuperscript{1}},
%  \textbf{Seventh Author\textsuperscript{1}},
%  \textbf{Eighth Author \textsuperscript{1,2,3,4}},
%\\
%  \textbf{Ninth Author\textsuperscript{1}},
%  \textbf{Tenth Author\textsuperscript{1}},
%  \textbf{Eleventh E. Author\textsuperscript{1,2,3,4,5}},
%  \textbf{Twelfth Author\textsuperscript{1}},
%\\
%  \textbf{Thirteenth Author\textsuperscript{3}},
%  \textbf{Fourteenth F. Author\textsuperscript{2,4}},
%  \textbf{Fifteenth Author\textsuperscript{1}},
%  \textbf{Sixteenth Author\textsuperscript{1}},
%\\
%  \textbf{Seventeenth S. Author\textsuperscript{4,5}},
%  \textbf{Eighteenth Author\textsuperscript{3,4}},
%  \textbf{Nineteenth N. Author\textsuperscript{2,5}},
%  \textbf{Twentieth Author\textsuperscript{1}}
%\\
%\\
%  \textsuperscript{1}Affiliation 1,
%  \textsuperscript{2}Affiliation 2,
%  \textsuperscript{3}Affiliation 3,
%  \textsuperscript{4}Affiliation 4,
%  \textsuperscript{5}Affiliation 5
%\\
%  \small{
%    \textbf{Correspondence:} \href{mailto:email@domain}{email@domain}
%  }
%}

\begin{document}
\maketitle
\begin{abstract}
Machine-generated Text (MGT) detection is crucial for regulating and attributing online texts.
While the existing MGT detectors achieve strong performance, they remain vulnerable to simple perturbations and adversarial attacks.
To build an effective defense against malicious perturbations, we view MGT detection from a threat modeling perspective, that is, analyzing the model's vulnerability from an adversary's point of view and exploring effective mitigations. 
To this end, we introduce an adversarial framework for training a robust MGT detector, named \textbf{GRE}edy \textbf{A}dversary
Promo\textbf{T}ed
Defend\textbf{ER}
(\modelname).
The \modelname consists of two key components: an adversary \attackname and a detector \defensename.
The \defensename learns to defend against the adversarial attack from \attackname and generalizes the defense to other attacks.
\attackname identifies and perturbs the critical tokens in embedding space, along with greedy search and pruning to generate stealthy and disruptive adversarial examples. 
Besides, we update the \attackname and \defensename synchronously, encouraging the \defensename to generalize its defense to different attacks and varying attack intensities. 
Our experimental results across 10 text perturbation strategies and 6 adversarial attacks show that our \defensename reduces the Attack Success Rate (ASR) by \textbf{0.67\%} compared with SOTA defense methods while our \attackname is demonstrated to be more effective and efficient than SOTA attack approaches. 
Codes and dataset are available in \url{https://github.com/Liyuuuu111/GREATER}.
\end{abstract}

%插入章节
\section{Introduction}
\label{sec:intro}
The rapid development of large language models (LLM)~\cite{achiam2023gpt,dubey2024llama,claude3, guo2025deepseek} enables the model to generate highly human-like texts, which has raised broad concerns about the unrestricted dissemination of non-attributed textual contents including misinformation, fabricate news, and phishing emails.
These negative impacts of MGTs lead to extensive works on MGT detection \cite{mitchell2023detectgpt, verma2024ghostbuster, liu2024does, baofast, liu2024does} to accurately attribute the authorship of textual content and inform the readers.
\begin{figure}[t]
    \centering
    \resizebox{0.5\textwidth}{!}{%
        \includegraphics{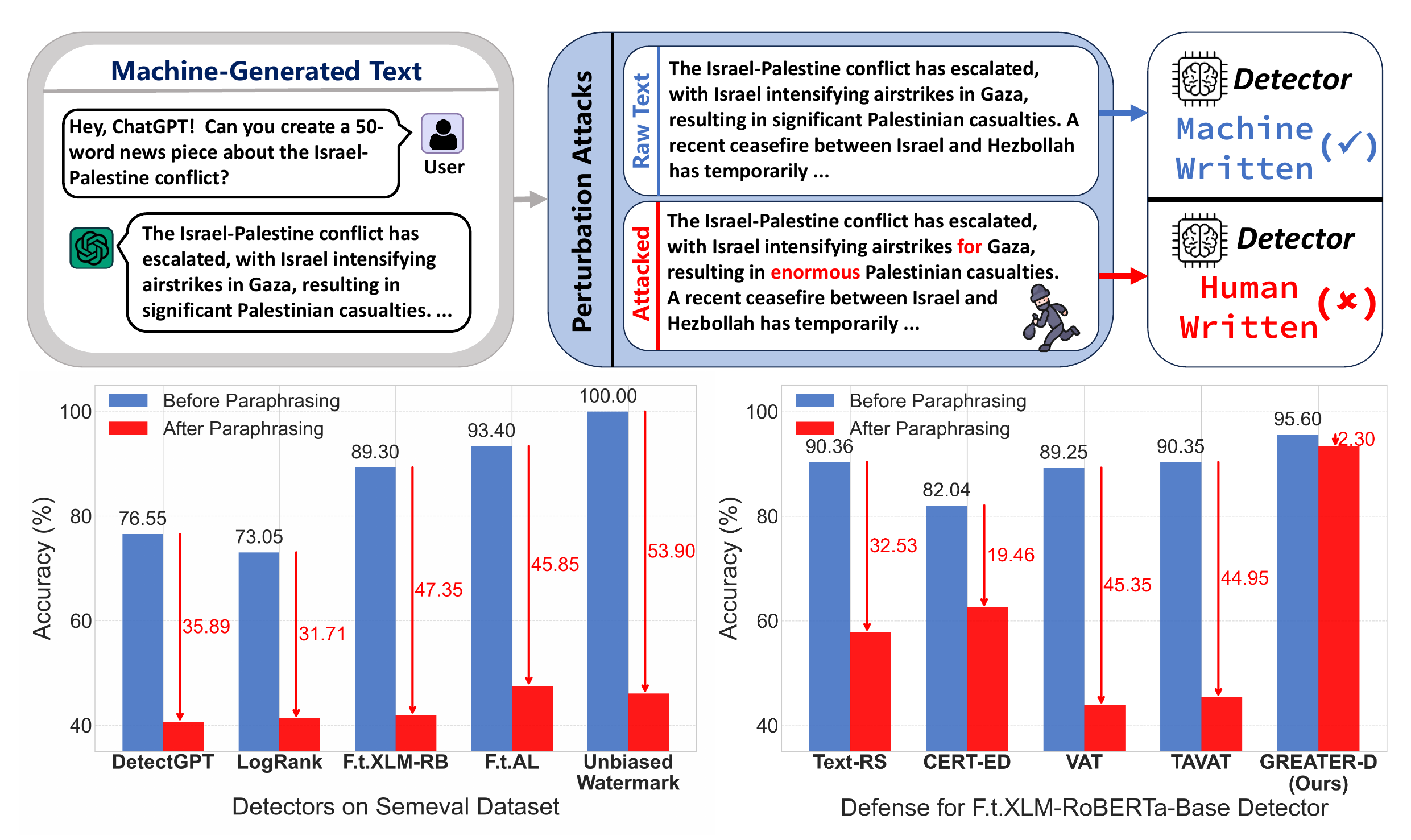} 
    }
	\caption{Performance drop of different MGT detectors and defense methods under text perturbation\protect\footnotemark.}
    \vspace{-0.4cm}
    
    \label{fig:compare1} 
\end{figure}

\footnotetext{The detectors include 
    DetectGPT~\cite{mitchell2023detectgpt}, LogRank~\cite{su2023detectllm}, fine-tuned xlm-roberta-base~\cite{liu2019roberta}, fine-tuned Albert-large~\cite{lan2019albert}, and Unbiased Watermark~\cite{hu2023unbiased}.
    The defense methods contain Text-RS~\cite{zhang2024random}, CERT-ED~\cite{huang2024cert}, VAT~\cite{miyato2016virtual}, TAVAT~\cite{li2021tavat}, and GREATER-D (ours).}

Despite the superb performance of current MGT detectors, a recent study \cite{wang2024stumbling} finds an astonishing fact that \textit{all} detectors exhibit different loopholes in robustness, that is, existing detectors suffer great performance drop when facing different \textbf{text perturbation strategies} including editing \cite{kukich1992techniques, gabrilovich2002homograph}, paraphrasing \cite{shi2024red}, prompting \cite{zamfirescu2023johnny}, and co-generating \cite{kushnareva2024ai}, \etc
As illustrated in Figure~\ref{fig:compare1}, the detection accuracy of current detectors drops by around 30\%-50\% when confronted with simple perturbations, and the defense methods for general text classification cannot be simply adapted to the MGT detection scenario.
More seriously, the vulnerability of MGT detectors is also unveiled by \textbf{adversarial attacks} that exploit the internal state~\cite{yoo2021towards} or outputs~\cite{liu2024hqa,yu2024query,hu2024fasttextdodger} of the detectors through multiple queries. 
Alas, there are few works on improving the robustness against adversarial attacks for MGT detectors.

\noindent\textbf{Motivation.} 
We rely on \textit{threat modeling} to
advance the robustness of the MGT detectors against perturbation and adversarial attacks.
As the proverb
says 'Iron sharpens iron', we focus on constructing powerful adversarial examples which mislead the prediction of the detector to facilitate the post-training of MGT detectors and defend against different attacks. 
Existing text perturbation strategies \cite{wang2024stumbling} adjust token distribution without accessing
information from the target MGT detector, resulting in low-quality and non-targeted adversarial
examples. 
The adversarial attacks
are only effective in white-box setting \cite{yoo2021towards} or require
excessive queries to target detectors \cite{hu2024fasttextdodger,yu2024query,liu2024hqa}.
Moreover, \citet{wang2024comprehensive} find that the defense built by adversarial training cannot generalize well to the attacks on which it was not originally trained.
To overcome these limitations, we propose an efficient adversarial training framework that works in a black-box setting and builds generalizable defense against a wide variety of perturbations and attacks for MGT detectors.

\noindent \textbf{Our Work.} 
In this paper, we propose an adversarial framework for training robust MGT detector, namely \textbf{GRE}edy \textbf{A}dversary
Promo\textbf{T}ed
Defende\textbf{R}.
(\modelname). 
\modelname consists of an adversary (\attackname) and a detector (\defensename).
The \defensename learns to discern MGTs from the human-written texts (HWTs), while the \attackname, which queries the output of the detector, aims to imply minimum perturbation on MGTs to deceive the detector.
Restricted by the scenario where \textbf{only outputs} from the target detector are available, we use an open-sourced surrogate model to retrieve gradient information to identify important tokens in the prediction. 
Afterwards, we introduce a gradient ascent perturbation on the embedding of MGTs from the surrogate model to enhance both the quality and stealthiness of generated adversarial text. 
To reduce the number of queries needed for building effective adversarial examples, we design a greedy search and pruning strategy. 
In the training stage, we update the \attackname and \defensename in the same training step so that \defensename learns from a curriculum of adversarial examples to generalize its defense.
The experiment results demonstrate that our method achieves an average ASR of 5.53\% against various attacks, which is 0.67\% lower compared to the SOTA defense methods.
We also find that \attackname achieves the most effective attack, achieving an ASR of 96.58\%, which surpasses SOTA attack methods by 8.45\% while requiring 4 times fewer queries.

Our contributions are as follows: 
\begin{itemize}
    \item \textbf{Adversarial Training Framework.} 
    We propose an adversarial training framework \modelname 
    to improve the robustness of MGT detectors, in which the adversary maliciously perturbs the MGTs to construct hard adversarial examples, while the detector is trained to maintain correct prediction towards the adversarial examples.
    We update the detector and the adversary generator in the same training step for better generalization on defense.
    \item 
    \textbf{Adversarial Examples Generation.}
    We propose a strong and efficient adversarial examples construction method in the black-box setting. 
    We retrieve gradient information from a surrogate model to rank the important tokens in MGT detection and design a greedy search and pruning strategy to reduce the query times needed for adversarial attacks.
    \item 
    \textbf{Outstanding Performance.} Testing results across 16 attack methods demonstrate that our detector outperforms 10 existing SOTA defense methods in robustness, while our adversary achieves significant improvements in both attack efficiency and effectiveness compared to 13 SOTA attack approaches. 
\end{itemize}

\section{Related Work}

\noindent \textbf{Machine-Generated Text (MGT) Detection.} 
There have been attempts to detect and attribute the MGTs in the pre-LLM era \cite{zhong2020neural, uchendu2020authorship}.
Nowadays, many works \cite{mitchell2023detectgpt, wang2023seqxgpt, liu2022coco, kushnareva2024ai, guodetective} aim to accurately annotate online texts as LLMs' astonishing ability to generate fluent, logical, and human-like content, which helps the proliferation of unchecked information.
Despite the achievements made in MGT detection, some works indicate the MGT detectors are vulnerable to simple perturbation or adversarial attacks.
For example, ~\citet{wang2024stumbling} test the robustness of eight MGT detectors with twelve perturbation strategies and they surprisingly find that none of the existing detectors remain robust under all
the attacks.
Moreover, MGT detectors' defense against adversarial attacks is also questioned \cite{fishchuk2023adversarial}.
Other studies also reveal the fact that MGT detectors suffer from authorship obfuscation \cite{macko2024authorship} and biased decision \cite{liang2023gpt}.
To mitigate the vulnerability of MGT detectors, our work focuses on improving detector robustness against text perturbations and adversarial attacks.

\noindent \textbf{Adversarial Training.} 
Adversarial training aims to optimize the model toward maintaining correct predictions against adversarial examples that are misleading data constructed for malicious purposes.
Earlier works first augment the training set with adversarial examples for defense against specific attacks~\cite{huang2024cert, zeng2023certified}.
These methods are shown to be hard to generalize to unseen attacks~\cite{wang2024comprehensive}.
\citet{yoo2021towards} update the adversary and the target model in the same step to generalize the defense to unseen attacks.
However, they rely on the availability of explicit first-order gradients, which is not applicable in the real-world case.
OUTFOX~\cite{koike2024outfox} identifies adversarial attack with in-context learning, which requires the demonstrations of adversarial samples as prompts. RADAR~\cite{hu2023radar} uses a paraphraser as the adversary. Both OUTFOX and RADAR are designed to defend against known attacks. 
Different from previous works, we propose an effective adversarial training framework for MGT detectors that builds a generalizable defense against a variety of attacks in the black-box setting.

\section{Threat Model}
We follow the standard threat modeling framework outlined in prior work \cite{biggio2018wild} and describe our assumptions about the adversary's goal and adversary's capability.

\noindent\textbf{Adversary's Goal.}\label{goal}
Given a piece of MGT, the goal of the adversary is to make trivial changes to the original MGT so as to mislead the prediction of the detector.
We refer the changed texts as \textit{adversarial examples}.
Ideally, the adversarial examples should satisfy three requirements:
\textit{i) Low Perturbation Rate.} 
Only trivial changes should be applied on the adversarial examples and the semantics of original texts should be retained.
\textit{ii) High Readability.}
Adversarial examples should exhibit high readability so that the attack is most invisible to humans.
\textit{iii) Less Query Requirements.}
The adversary should be query-efficient to reduce the time and budget needed to construct each adversarial example.
    
\noindent\textbf{Adversary's Capability. }
We assume the adversary's capability in a real-world setting.
First, an adversary only maliciously edits the MGTs but would not make any changes to the HWTs.
This is because HWTs are trustworthy and there is no need for the adversary to change the prediction on HWTs.
Second, since most commercial MGT detectors (\eg GPT Zero\footnote{\url{https://gptzero.me/}}) are close-sourced, the adversary should not have access to model weights and internal states of the target model.
The only information the adversary is permitted to query is the output of the detector.
Third, the adversary is allowed to access any open-sourced models. 

\section{Methodology} \label{sec:generator}
We introduce the framework of \modelname in this section.
The architecture of \modelname is shown in Figure \ref{fig:framework}. 
In the following subsections, we first describe the workflow of the adversary and detector, respectively. 
Then we systematically outline the adversarial training process.

\begin{figure*}[htbp]
    \centering
    \resizebox{1\textwidth}{!}{%
        \includegraphics{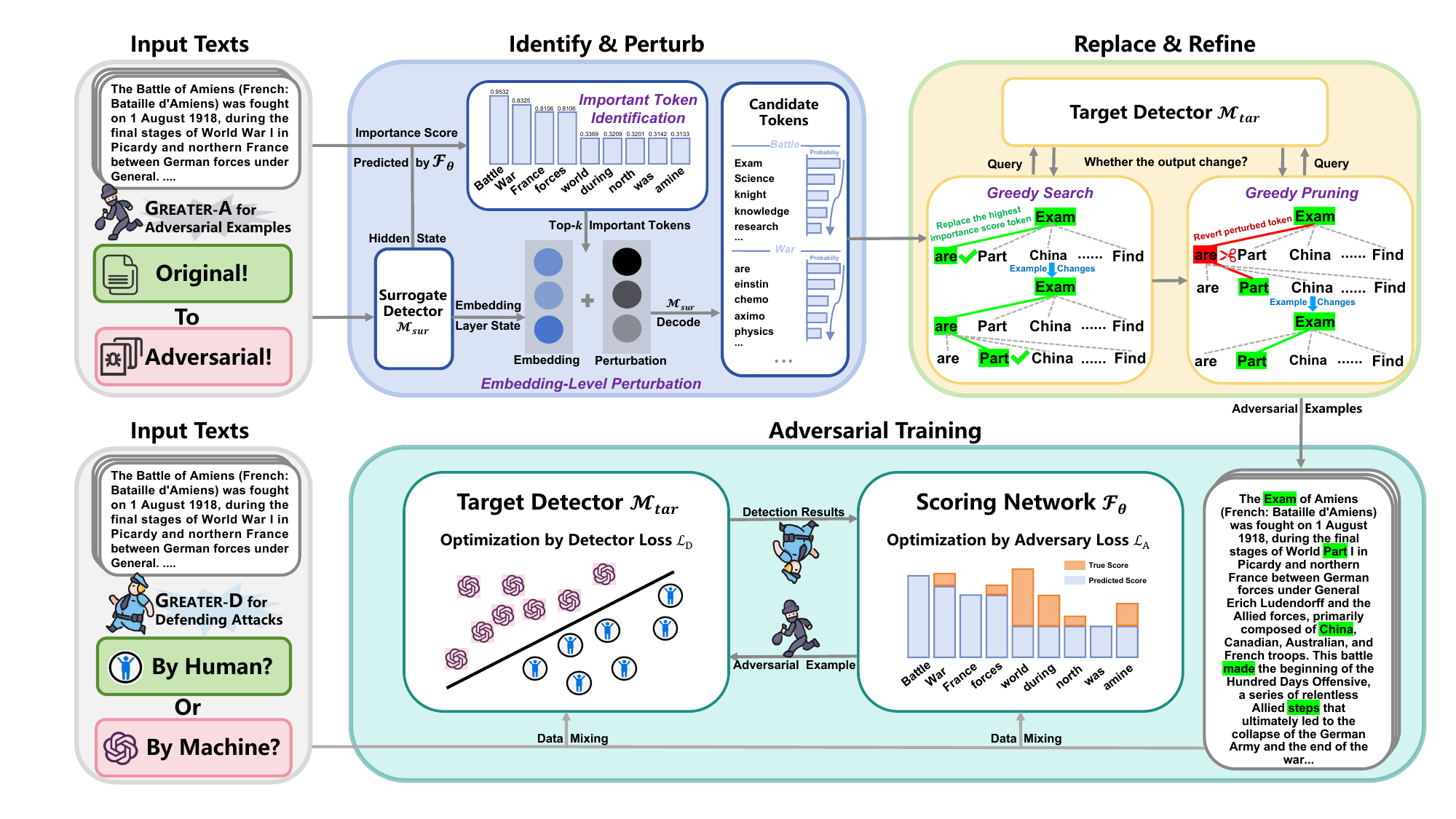} 
    }

    \caption{\textbf{Pipeline of \modelname.}
    The adversary identifies important tokens in the original MGT and generates candidates for important tokens (\secref{AEG}). The adversary conducts and refines the attack by greedy search and pruning (\secref{AEE}). The final adversarial examples are feeded to the target detector (\secref{DA})
    and participate in the adversarial training process (\secref{AT}).}
    \vspace{-0.4cm}
    \label{fig:framework} 
\end{figure*}

\subsection{\attackname for Generating Adversarial Examples}
\label{sec:advgen}
To achieve the Adversary's Goal outlined in \secref{goal}, we developed an effective and efficient adversary. Specifically, the adversary achieves these requirements through two stages: \textit{Identify \& Perturb} and \textit{Replace \& Refine}.

\subsubsection{Identify \& Perturb}\label{AEG}
In this module, we design a token importance estimation module and apply a targeted perturbation on the embeddings of important tokens.

\fakeparagraph{Important Token Identification.}
We consider a black-box setting where the internal state of the target detector $\mathcal{M}_{tar}(.)$ is inaccessible.
Given an original MGT $X = [x_1, x_2, \ldots, x_T]$ consisting of $T$ tokens, we utilize a surrogate model $\mathcal{M}_{sur}(.)$ instead to obtain the last layer hidden state of each token in the text:
\begin{equation}
\label{eq:1}
H = \mathcal{M}_{sur}(X) = [\mathbf{h}_1, \mathbf{h}_2, \ldots, \mathbf{h}_T],
\end{equation}
where $\mathbf{h}_t$ represents the last layer hidden state of the $t$-th token $x_t$ generated by $\mathcal{M}_{sur}(.)$.
To obtain the importance score of each token $s_t$, we train a simple scoring network $\mathcal{F}_{\theta}(.)$ which takes the feature embeddings as input and outputs the prediction of importance scores for each token:
\begin{equation}
\label{eq:2}
s_t = \mathcal{F}_{\theta}(\mathbf{h}_t),
\end{equation}
where $\theta$ are learnable parameters.
Then, we select the top-$k$ tokens with the highest importance scores in text $X$ and construct the important-token set $\mathbf{I}$:
\begin{equation}
\label{eq:3}
\mathbf{I}=\text{top-$k$}\left([ (x_t, s_t) \mid t = 1, 2, \dots, T ]\right).
\end{equation}

To mitigate the impact of discrepancies between the $\mathcal{M}_{sur}(.)$ and $\mathcal{M}_{tar}(.)$, we leverage the predictions of the target detector $\mathcal{M}_{tar}(.)$ to guide $\mathcal{F}_{\theta}(.)$ in more accurately identifying important tokens during adversarial training process. 
We detail the adversarial training process in \secref{Adversarial Training}.

\fakeparagraph{Embedding-level Perturbation.}
We apply a targeted perturbation on the embedding of the tokens in $\mathbf{I}$ to improve the attack effectiveness while preserving semantic integrity.
Formally, we introduce perturbations to the tokens in set $\mathbf{I}$ within the embeddings $E = [\mathbf{e}_1, \mathbf{e}_2, \ldots, \mathbf{e}_T]$ of the surrogate model to obtain the perturbed embedding $\tilde{E} = [\tilde{\mathbf{e}}_1, \tilde{\mathbf{e}}_2, \ldots, \tilde{\mathbf{e}}_T]$:
\begin{equation}
\label{eq:4}
\tilde{\mathbf{e}}_t = \mathbf{e}_t + \mathbf{1} _{[t \in \mathbf{I}]}\delta_t, 
\end{equation}
where $\mathbf{\delta}_t$ represents the perturbation of the $t$-th token, and $\mathbf{1} _{[t \in \mathbf{I}]}$ represents an indicator function with a value of $1$ if and only if the condition $t \in \mathbf{I}$ is satisfied, otherwise, it is $0$.
For the calculation of $\mathbf{\delta}_t$, we first initialize the perturbation from a normalized uniform distribution, and then design a single-step gradient ascent strategy to optimize the perturbation.
Specifically, we update the perturbation towards the direction where the KL divergence between the output distributions with respect to the original embedding $E$ and initial perturbed embedding $\tilde{E}^{0}$ increases most steeply.
This process is formulated as:
\begin{equation}
\footnotesize
\label{pre}
\vcenter{\hbox{$\displaystyle 
\begin{gathered}
\mathbf{\delta}_t^{0} \sim \mathcal{U}(a, b),\quad 
\hat{\mathbf{\delta}}_t^{0} = \xi \frac{\mathbf{\delta}_t^{0}}{\|\mathbf{\delta}_t^{0}\|_2}, \\[6pt]
\mathbf{\delta}_t = \epsilon\,  
\frac{
  \nabla_{\hat{\mathbf{\delta}}_t^{0}}\,\text{KL}\Bigl(P_{\text{sur}}(\mathbf{y}\mid E) \,\|\, P_{\text{sur}}(\mathbf{y}\mid \tilde{E}^{0})\Bigr)
}{
  \Bigl\|\nabla_{\hat{\mathbf{\delta}}_t^{0}}\,\text{KL}\Bigl(P_{\text{sur}}(\mathbf{y}\mid E) \,\|\, P_{\text{sur}}(\mathbf{y}\mid \tilde{E}^{0})\Bigr)\Bigr\|_2
},
\end{gathered}
$}}
\end{equation}
where $\hat{\mathbf{\delta}}_t^{0}$ represents the normalized value of $\mathbf{\delta}_{t}^{0}$, and $\xi$ is a scaling factor. The parameters $a$ and $b$ define the lower and upper bounds of the uniform distribution $\mathcal{U}(a, b)$, $\epsilon$ is a scaling factor, $P_{\text{sur}}(\mathbf{y} | E)$ and $P_{\text{sur}}(\mathbf{y} | \tilde{E}^{0})$ are label distribution before and after perturbation, respectively.

Afterwards, we project the $\tilde{E}$ back to the vocabulary with the language modeling head and select the top-$m$ tokens with the highest probabilities as candidates for replacing the important tokens:
\begin{equation}
\label{eq:7}
\footnotesize
\mathbf{C}_t = \text{top-$m$}(\mathrm{Softmax}(\mathrm{LMHead}(\mathcal{M}_{sur}(\tilde{E}))_t)), 
\end{equation}
where $\mathrm{LMHead}(\mathcal{M}_{sur}(\tilde{E}))_t$ represents the output of the $\mathrm{LMHead}(.)$ at position $t$.
Following POS Constraints \cite{zhou2024humanizing}, which require that the candidate words must match the part of speech of the words they replace, we filter the candidate tokens so that the candidate set contains the tokens with the same POS as the original one.

\subsubsection{Replace \& Refine}\label{AEE}
Based on the token importance calculated in \secref{AEG}, we introduce a greedy search and a greedy pruning strategy to efficiently construct powerful adversarial examples facilitating adversarial training.
 
\begin{algorithm}[t]
\small
\caption{Greedy Search Procedure}
\label{alg:algorithm2_short}
\begin{algorithmic}[1]
	\State {\textbf{Input:} Target Detector $\mathcal{M}_{tar}$, Original MGT $X$, Label $c$, Important-token Set $\mathbf{I}$.}
    \State $round \leftarrow 0$ and $\tilde{X} \leftarrow X$.
    \While{$round< round_{max}$}
       \State Get the token $x_{t}=\mathbf{I}[round]$ to be perturbed.
       \State{Compute corresponding perturbation $\mathbf{\delta}_t$ using Eq.\eqref{pre}.}
       \State{Get candidates $\mathbf{C}_t$ for replacing token using Eq.\eqref{eq:7}.}
       \State{Replace $x_t$ with the token in $\mathbf{C}_t$ to update $\tilde{X}$.}
       \State Classify $\tilde{X}$ via $\mathcal{M}_{tar}$ and obtain the output $\tilde{c}$.
       \If{$\tilde{c} \neq c$}
          \State \textbf{break} \quad // Attack success
       \Else
          \State $round \leftarrow round + 1$
          \quad // Attack failed
       \EndIf
    \EndWhile
    \State \textbf{Output:} Adversarial Example $\tilde{X}$.
\end{algorithmic}
\end{algorithm}

\begin{algorithm}[t]
\small
\caption{Greedy Pruning Procedure}
\label{alg:algorithm3_short}
\begin{algorithmic}[1]
	\State {\textbf{Input:} Adversarial Example $\tilde{X}$ from Algorithm \ref{alg:algorithm2_short}, Label $c$, Important-token Set $\mathbf{I}$ and Target Detector $\mathcal{M}_{tar}$.}
    \For{each perturbed token $\tilde{x}_t$ in $\tilde{X}$}
        \State {Revert $\tilde{x}_t$ to corresponding token $x_t$ in $X$.}
        \State Classify $\tilde{X}$ via $\mathcal{M}_{tar}$ and obtain the output $\tilde{c}$. 
        \If{$\tilde{c} \neq c$.}
            \State \textbf{continue} \quad // revert successful
        \Else
            \State Replace $x_t$ with $\tilde{x}_t$ again. \quad // revert failed
        \EndIf
    \EndFor
    \State \textbf{Output:} Final Adversarial Example $\tilde{X}$.
\end{algorithmic}
\end{algorithm}
\fakeparagraph{Greedy Search.} We present the process of greedy search in Algorithm \ref{alg:algorithm2_short}. 
For a piece of original MGT $X$, we substitute the token $x_t$ with the most possible candidate token in $\mathbf{C}_t$ sequentially according to the descending order of importance scores. 
After each replacement, we query the target model $\mathcal{M}_{tar}$ if the adversarial example in the current step is machine-generated.
We iterate this token-replacing procedure until the adversarial example deceives the $\mathcal{M}_{tar}$ or all the tokens in $\mathbf{I}$ are replaced. 
We then use the successful adversarial example (if the attack succeeds) or the text that is perturbed the most as training data for $\mathcal{M}_{tar}$.
Compared to existing methods \cite{liu2024hqa,yu2024query,hu2024fasttextdodger}, our method perturbs only the tokens in set $\mathbf{I}$ incrementally with the guidance of importance score, enabling the generation of adversarial examples with fewer queries and lower perturbation rates.

\fakeparagraph{Greedy Pruning.}
Due to the local optimality characteristic of greedy search~\cite{yu2024query, prim1957shortest}, the adversarial examples constructed by greedy search contain redundant perturbations. 
We apply the greedy pruning algorithm, as shown in Algorithm \ref{alg:algorithm3_short}, to further reduce the perturbation rate and make the attack stealthy without sacrificing its effectiveness.
Given an adversarial example $\tilde{X}$ generated with greedy search, we sample the perturbed token by order of importance scores and replace the selected token with its corresponding original token one-by-one.
After each restoration, we query the target model $\mathcal{M}_{tar}$ if $\tilde{X}$ is still a successful adversarial example.
If the attack remains successful after restoration,
the token is converted to the original one; otherwise,
we preserve the perturbed token.
We loop over all perturbed tokens and produce the final adversarial example
$\tilde{X}$.

To further validate the efficiency of our method, we provide a detailed theoretical analysis of query complexity and perturbation rate in \attackname in Appendix \ref{sec:analysis}. 

\subsection{\defensename for Defending Attacks}
\label{DA}
Our \defensename contains a target detector $\mathcal{M}_{tar}(.)$.
For the target detector $\mathcal{M}_{tar}(.)$, we expect it to learn to defend against adversarial attacks from the adversary and generalize the defense ability to other attacks. Specifically, given the target detector $\mathcal{M}_{tar}(.)$, for each original MGT $X_{i} \in \mathbf{X}$ and the corresponding adversarial example $\tilde{X}_{i}$ generated by the adversary, which share the same label $c_{i}$, the optimization objective of the target detector is to make correct predictions for both the original MGT and the adversarial example:
\begin{equation}
\label{eq:8}
\footnotesize
    \min_{\theta}(\sum_{X_{i}\in \mathbf{X}} (\mathcal{L}(\mathcal{M}^\theta_{tar}(X_{i}), c_{i})+\mathcal{L}(\mathcal{M}^\theta_{tar}(\tilde{X}_{i}), c_{i}))),
\end{equation}
where $\theta$ represents the learnable parameters of the target detector $\mathcal{M}_{tar}(.)$ and $\mathcal{L}$ is the loss function. 
This optimization objective aims to guide the $\mathcal{M}_{tar}(.)$ to simultaneously enhance its performance on both original MGT and adversarial examples, thereby compelling it to learn robust features of samples before and after attacks.
Regardless of whether the samples are subjected to adversarial interference, these features ensure that the detector can accurately classify the samples. As a result, the detector is better equipped to handle various types of attacks.

\subsection{Adversarial Training}
\label{AT}
We propose to train the adversary and detector in a co-training manner, that is, the two main components in \modelname are updated in the same training step. 
Unlike the previous methods \cite{zhang2024random, huang2024cert,zeng2023certified} that rely on static training set augmented with adversarial examples, synchronously updating the adversary and the detector allows the detector to learn from easy adversarial examples to hard ones, facilitating the defense to generalize to different attacks.

\noindent\textbf{Adversary Loss.}
The goal of the adversary is to precisely estimate the importance of each token in the detection to guide it to undertake a successful attack.
Thus, we use the gradient with respect to the input in calculating the cross-entropy loss on training data as the golden label for the token importance score.
However, since we are constrained in a black-box setting, we utilize the $\mathcal{M}_{sur}(.)$ to obtain the gradient.
This process is formulated as:
\begin{equation}
\label{eq:combined}  
\vcenter{\hbox{$\displaystyle 
\begin{gathered}
s^*_x = \Bigl\|\nabla_{x} \mathcal{L}_{\text{sur}}\Bigr\|_2, \\[6pt]
\mathcal{L}_{\text{sur}} = \frac{1}{M} \sum_{i=1}^{M} \left[ -\log P_{\text{sur}}(c_{i} \mid X_{i})\right],
\end{gathered}
$}}
\end{equation}
where $\mathcal{L}_{\text{sur}}$ is the cross-entropy classification loss of the surrogate model to the original MGTs.
Subsequently, we update the scoring network $\mathcal{F}_{\theta}(.)$ with the mean squared error loss:
\begin{equation}
\label{eq:12}
    \mathcal{L}_{\text{imp}} = \frac{1}{M} \sum_{i=1}^{M} \frac{1}{|X_i|} \sum_{x \in X_i} (s_x-s_x^*)^{2},
\end{equation}
where $|X_i|$ is the number of tokens in sample $X_i$. 
In addition, we leverage the output information from the target detector to guide the training of the $\mathcal{F}_{\theta}(.)$, thereby mitigating the impact of discrepancies between the surrogate and target detector.
Specifically, we replace the true labels in the cross-entropy loss with misleading labels to direct the $\mathcal{F}_{\theta}(.)$ to produce samples that are capable to deceive the detector:
\begin{equation}
\label{eq:13}
    \mathcal{L}_{\text{adv}} = \frac{1}{M} \sum_{i=1}^{M} \left[ -\text{log}P_{\text{tar}}(1-c_{i}|\tilde{X}_{i})\right].
\end{equation}
Finally, we balance the influence of various losses on the adversary by adjusting the weight parameter $\lambda$. The total loss for the adversary is given as follows:
\begin{equation}
\label{eq:14}
\mathcal{L}_{\text{A}} = \lambda \, \mathcal{L}_{\text{adv}} + (1-\lambda) \, \mathcal{L}_{\text{imp}}.
\end{equation}

\noindent\textbf{Detector Loss.}
The detector's goal is to maintain correct predictions in all circumstances.
Based on this, we optimize the detector towards minimizing a cross-entropy loss:
\begin{equation}
\label{eq:9}
\footnotesize
    \mathcal{L}_{\text{D}} = \frac{1}{M} \sum_{i=1}^{M} \left[ -\text{log}P_{\text{tar}}(c_{i}|X_{i})-\text{log}P_{\text{tar}}(c_{i}|\tilde{X}_{i})\right],
\end{equation}
where $P_{\text{tar}}(c_{i}|X_{i})$ is the probability of the target detector output at the correct label $c_{i}$. 

\noindent\textbf{Training Process.}
We update the adversary and the detector alternatively in the same training step.
Specifically, at each training step $t$, we first update the adversary with loss function \eqref{eq:13} while keeping the detector frozen.
The updated adversary generates adversarial example $\tilde{X}_i$ in the current step.
Then we update the detector with loss function \eqref{eq:9}.
We detail the adversarial training process in form of pseudocode in Appendix \ref{appd: pseudo}.

\label{Adversarial Training}

\section{Experiment Results}  
\label{sec:exp}
\vspace{-0.2cm}
We conduct extensive experiments to comprehensively evaluate the defense performance of our \defensename and also reveal the vulnerability of the current defense strategy with the adversarial examples generated by \attackname.

\begin{table*}[ht]
\centering
\renewcommand\arraystretch{1.8}
\footnotesize
\resizebox{\textwidth}{!}{
\begin{tabular}{c c c c c c c c c c c c c c c}
\toprule
\multirow{2}{*}{\textbf{Category}} 
& \multirow{2}{*}{\textbf{Method}} 
& \multirow{2}{*}{\textbf{Metric}} 
& \multicolumn{1}{c}{\textbf{Baseline}} 
& \multicolumn{6}{c}{\textbf{Adversarial Data Augmentation}} 
& \multicolumn{5}{c}{\textbf{Adversarial Training}} \\
\cmidrule(lr){4-4} \cmidrule(lr){5-10} \cmidrule(lr){11-15}
& & 
& F.t.XLM-RoBERTa-Base 
& EP & PP & CERT-ED & RanMask & Text-RS & Text-CRS 
& VAT & TAVAT & RADAR & OUTFOX & GREATER-D \\
\midrule

\multirow{12}{*}{\makecell{\textbf{Text} \\ \textbf{Perturbation}}}

& \textbf{\emph{Mixed Edit}} 
& \emph{ASR(\%)$\downarrow$}  
& 34.65 & \cellcolor{green!50}{4.58} & 32.33 & 16.74 & 17.00 & 22.81 & 15.34 & 33.19 & 37.50 & 18.76 & 11.33 & 8.85 \\

& \textbf{\emph{HMGC}} 
& \emph{ASR(\%)$\downarrow$}  
& 30.00 & 6.53 & 27.34 & 12.52 & 11.74 & 16.58 & 13.37 & 25.19 & 31.50 & 8.53 & 3.74 & \cellcolor{green!50}{2.28} \\

& \textbf{\emph{Paraphrasing}}
& \emph{ASR(\%)$\downarrow$} 
& 70.58 & 54.65 & 6.27 & 32.10 & 40.20 & 59.58 & 26.67 & 67.98 & 65.90 & \cellcolor{green!50}2.08 & 14.96 &  {3.45} \\

& \textbf{\emph{Code-switching MF}$^{*}$}
& \emph{ASR(\%)$\downarrow$} 
& 50.58 & 38.37 & 34.08 & 29.19 & 27.78 & 48.80 & 27.15 & 36.71 & 47.52 & 32.57 & 5.46 & \cellcolor{green!50}{1.13} \\

& \textbf{\emph{Code-switching MR}$^{*}$}
& \emph{ASR(\%)$\downarrow$}  
& 47.91 & 31.23 & 30.21 & 5.30  & 10.80 & 16.98 & 14.36 & 38.03 & 46.46 & 14.14 & 5.37 & \cellcolor{green!50}{1.02} \\

& \textbf{\emph{Human Obfuscation}}
& \emph{ASR(\%)$\downarrow$}  
& 18.42 & 22.19 & 27.00 & 13.64 & 18.86 & 18.05 & 15.24 & 25.35 & 24.17 & 16.26 & 5.74 & \cellcolor{green!50}{0.86} \\

& \textbf{\emph{Emoji-cogen}}
& \emph{ASR(\%)$\downarrow$} 
& 32.19 & 46.55 & 44.17 & 11.41 & 22.23 & 17.72 & 27.10 & 52.35 & 40.33 & 33.13 & 7.72 & \cellcolor{green!50}{0.47} \\

& \textbf{\emph{Typo-cogen}}
& \emph{ASR(\%)$\downarrow$} 
& 60.10 & 61.57 & 59.72 & 27.29 & 38.82 & 37.09 & 44.79 & 70.26 & 63.04 & 41.52 & 9.29 & \cellcolor{green!50}{1.08} \\

& \textbf{\emph{ICL}} 
& \emph{ASR(\%)$\downarrow$}
& 1.40 & 1.41 & 1.30 & 1.72 & 0.83 & 1.89 & 0.67 & 1.13 & 1.88 & 2.77 & 0.59 &\cellcolor{green!50}{0.20} \\

& \textbf{\emph{Prompt Paraphrasing}}
& \emph{ASR(\%)$\downarrow$} 
& \cellcolor{green!50}{0.00} & 0.69 & \cellcolor{green!50}{0.00} & 0.70 & \cellcolor{green!50}0.00 & \cellcolor{green!50}{0.00} & \cellcolor{green!50}0.00 & \cellcolor{green!50}0.00 &  \cellcolor{green!50}0.00 & 0.94 & 0.65 & 0.23 \\

& \textbf{\emph{CSGen}}
& \emph{ASR(\%)$\downarrow$}  
& 25.44 & 22.81 & 6.14 & 10.65 & 1.70 & 2.41 & 23.95 & 26.88 & 27.52 & 17.07 & 3.40 & \cellcolor{green!50}0.00 \\

\cdashline{2-15}[1pt/1pt]

& \textbf{\emph{Avg.}} 
& \emph{ASR(\%)$\downarrow$}  
& 33.75	& 26.42	& 24.41	& 14.66	& 17.27	& 21.99	& 18.97	& 34.28	& 35.07	& 17.07	& 6.20	& \cellcolor{green!50}1.78
\\

\midrule

\multirow{14}{*}{\makecell{\textbf{Adversarial} \\ \textbf{Attack}} }

& \multirow{2}{*}{\textbf{\emph{PWWS}}} 
& \emph{ASR(\%)$\downarrow$}
& 61.45 & 48.47 & 49.06 & 11.07 & 14.83 & 16.13 & 19.15 & 53.56 & 62.68 & 12.58 & - & \cellcolor{green!50}5.91 \\
& 
& \emph{Queries$\uparrow$} 
& 1197.95 & 1237.62 & 1235.01 & 1371.28 & 1361.15 & 1356.67 & 1342.15 & 1226.53 & 1180.87 & 1366.85 & - &\cellcolor{green!50}1390.69 \\

& \multirow{2}{*}{\textbf{\emph{TextFooler}}} 
& \emph{ASR(\%)$\downarrow$} 
& 72.29 & 70.76 & 70.02 & 11.07 & 17.43 & 19.56 & 22.98 & 69.25 & 94.02 & 14.46 & - & \cellcolor{green!50}6.11 \\
& 
& \emph{Queries$\uparrow$} 
& 690.92 & 724.23 & 713.77 & 1362.17 & 1291.35 & 1271.34 & 1237.18 & 780.51 & 440.28 & 1301.84 & - & \cellcolor{green!50}1396.52 \\

& \multirow{2}{*}{\textbf{\emph{BERTAttack}}} 
& \emph{ASR(\%)$\downarrow$}
& 71.49 & 69.34 & 63.52 & 6.04 & 12.42 & 12.30 & 16.94 & 69.45 & 93.81 & 13.08 & - & \cellcolor{green!50}5.70 \\
& 
& \emph{Queries$\uparrow$} 
& 411.54 & 446.64 & 425.68 & 710.56 & 682.29 & 680.14 & 667.82 & 450.57 & 279.02 & 677.24 & - & \cellcolor{green!50}718.84 \\

& \multirow{2}{*}{\textbf{\emph{A2T}}} 
& \emph{ASR(\%)$\downarrow$}
& 45.47 & 37.45 & 50.10 & 6.24 & 11.22 & 14.52 & 14.31 & 29.12 & 54.64 & 8.58 & - & \cellcolor{green!50}5.09 \\
& 
& \emph{Queries$\uparrow$} 
& 293.26 & 320.88 & 302.18 & 516.07 & 499.89 & 493.25 & 488.91 & 361.88 & 207.15 & 505.54 & - & \cellcolor{green!50}519.77 \\

& \multirow{2}{*}{\textbf{\emph{T-PGD}}} 
& \emph{ASR(\%)$\downarrow$}
& 49.90	& 34.15	& 25.28	& 8.52	& 12.07	& 13.15	& 14.92	& 42.14	& 45.58	& 7.79	& -	& \cellcolor{green!50}5.61 \\
& 
& \emph{Queries$\uparrow$} 
& 354.75 & 515.62 & 712.08 & 799.62 & 766.53 & 752.04	& 744.17 & 404.85 & 377.69 & 813.76	& -	& \cellcolor{green!50}824.58 \\

& \multirow{2}{*}{\textbf{\emph{GREATER-A(ours)}}} 
& \emph{ASR(\%)$\downarrow$} 
& 96.58 & 87.08 & 84.34 & 62.17 & 63.58 & 75.25 & 82.29 & 89.26 & 85.02 & 71.04 & - & \cellcolor{green!50}46.08 \\
& 
& \emph{Queries$\uparrow$} 
& 62.63 & 66.71 & 68.53 & 99.56 & 98.92 & 106.08 & 75.98 & 63.57 & 67.17 & 100.74 & - & \cellcolor{green!50}190.64 \\

\cdashline{2-15}[1pt/1pt]

& \multirow{2}{*}{\textbf{\emph{Avg.}}} 
& \emph{ASR(\%)$\downarrow$} 
& 66.20	& 57.87	& 57.05	& 17.52	& 21.92	& 25.15	& 28.43	& 58.80	& 72.62	& 21.26	& - & \cellcolor{green!50}12.42 \\
& 
& \emph{Queries$\uparrow$}
& 501.84 & 551.95 & 576.21	& 809.88 & 783.36	& 776.59	& 759.37	& 547.98	& 425.36	& 794.33	& -	& \cellcolor{green!50}840.17
\\

\midrule

\multirow{1}{*}{\textbf{Total}} & \textbf{\emph{Avg.}} & \emph{ASR(\%)$\downarrow$} 
& 45.20	& 37.52	& 35.93	& 15.67	& 18.91	& 23.11	& 22.31	& 42.93	& 48.33	& 18.55	& 6.20	& \cellcolor{green!50}5.53\\
\bottomrule
\end{tabular}
}
\caption{\textbf{Performance of defense methods under different attacks.} 
The best results are highlighted in \colorbox{green!50}{green} background.
${*}$ means that Code-switching MF and Code-switching MR are two variations of Code-switching method. Since the detector in OUTFOX is a closed-source LLM, it cannot be attacked by adversarial attack. Therefore, we only evaluate OUTFOX under text perturbation attacks.}
\label{tab:all_results_modified}
\end{table*}

\subsection{Defense Performance for \defensename}
\fakeparagraph{Experiment Setting.}
We evaluate our defense model \defensename against 16 text perturbation and adversarial attack methods, whose detailed introductions are outlined in the Appendix~\ref{sec:pert}. 
The competitors include \textbf{i) data augmentation methods:} Editing Pretrained (EP) \cite{wang2024comprehensive}, Paraphrasing Pretrained (PP) \cite{wang2024comprehensive}, CERT-ED \cite{huang2024cert}, RanMask \cite{zeng2023certified}, Text-RS \cite{zhang2024random}, Text-CRS \cite{zhang2024text}.
\textbf{ii) adversarial training methods:}
Virtual Adversarial Training (VAT) \cite{miyato2016virtual}, Token Aware Virtual Adversarial Training (TAVAT) \cite{li2021tavat}, RADAR \cite{hu2023radar}, OUTFOX \cite{koike2024outfox}.
Detailed introduction and implementation are presented in Appendix \ref{apdx:defmethod} and \ref{appdx:implementation}. The dataset we use is presented in Appendix~\ref{apdx:dataset}.

\fakeparagraph{Experiment results.}
We present the defense performance of \defensename in Table~\ref{tab:all_results_modified} and unveil the following three key insights:
1) \textbf{Best defense performance.}
Our method exhibits the best defense performance against different attacks among all competitors.
The average Attack Success Rate (ASR) drops to \textbf{1.78\%} for text perturbation attacks and \textbf{12.42\%} for adversarial attacks, which are lower than the second-best methods, \textbf{RADAR (6.20\%)} and \textbf{CERT-ED (17.52\%)}, by \textbf{4.42\%} and \textbf{5.10\%}, respectively.
2) \textbf{Most effort needed for adversarial attack.}
We observe that it takes adversarial attack more resources to conduct a successful attack to \defensename.
\textbf{840.17} queries are required on average, which is \textbf{30.29} more than other defense methods.
Moreover, the average ASR against \defensename is only \textbf{12.42\%} for adversarial attacks. 
It illustrates that \defensename makes adversarial attacks both inefficient and ineffective.
3) \textbf{Generalized Defense to Different Attacks.}
We notice that \defensename significantly reduces the ASR of different kinds of attacks even though it is trained with \attackname.
As an exemplary method, EP performs better than \defensename when defending Mixed Edit Attack but cannot withstand other attacks.
The defense against a wide variety of attacks demonstrates the generalized defense effect of \defensename.
\vspace{-0.1cm}

\subsection{Attack Performance for \attackname}
\label{sec:GeneratorResult}
In this section, we evaluate the effectiveness of \attackname in the black-box setting using the metrics detailed in Appendix~\ref{apdx:metric}.
We categorize the comparison methods into two classes:
i) \textit{Query-based methods}, which query the target model for output to adjust attack strategy, including PWWS~\cite{ren2019generating}, TextFooler~\cite{jin2020bert}, BERTAttack~\cite{li2020bert},
HQA~\cite{liu2024hqa}, ABP~\cite{yu2024query}, T-PGD~\cite{yuan2023bridge}, and FastTextDodger~\cite{hu2024fasttextdodger}.
ii) \textit{Zero-query methods}, which conducts attack without any information from the target model, including WordNet~\cite{zhou2024humanizing}, Back Translation~\cite{zhou2024humanizing}, Rewrite~\cite{zhou2024humanizing}, 
T-PGD~\cite{yuan2023bridge}, and HMGC~\cite{zhou2024humanizing}.
To further demonstrate the effectiveness of \attackname, we also incorporate A2T~\cite{yoo2021towards},  a SOTA white-box method in query-based methods.
Detailed introduction of these attack methods are listed in Appendix~\ref{sec:attack}.
Note that \attackname is a query-based method.
However, for a fair comparison, we also implement our method in a zero-query setting where we query the surrogate model for feedback. 
Among all the attacks, we employ a fine-tuned XLM-RoBERTa-Base model \cite{conneau2019unsupervised}
as the target detector.

\fakeparagraph{Experiment Results.}
We show the experiment results in Table \ref{tab:merged_attack_results}.
We find \attackname performs the best in three dimensions: effectiveness, efficiency, and stealthy.
\attackname achieves 96.58\% and 69.11\% in terms of ASR in query and zero-query settings, respectively, which significantly outperforms all other methods.
Moreover, in the query-based setting, it only takes 62.63 queries for \attackname to conduct a successful attack, which is four times fewer than its competitors.
As for the stealthy, the texts edited by \attackname achieves the lowest perplexity and has the best readability implied by the highest USE and lowest readability change in query-based scenario.
In the zero-query setting, \attackname performs second-best in terms of readability after Back Translation which can rarely conduct a successful attack.

\begin{table}[ht]
\centering
\renewcommand{\arraystretch}{1.8}
\resizebox{0.45\textwidth}{!}{%
\begin{tabular}{c c c c c c c c}
\toprule
\textbf{{Attack Type}} & \textbf{Method} & \textbf{Avg Queries $\downarrow$} & \textbf{ASR (\%) $\uparrow$} & \textbf{Pert. (\%) $\downarrow$} & \(\Delta\mathrm{PPL}\) $\downarrow$ & \textbf{USE $\uparrow$} & \(\Delta r\) $\downarrow$ \\
\midrule
\multirow{9}{*}{\textbf{Query-based}} 
    & PWWS                & 1197.95               & 61.45                      & \cellcolor{green!50}4.71 & 37.85   & 0.9488 & 12.76  \\
    & TextFooler          & 690.92                & 72.29                      & 6.26                          & 46.89   & 0.9302 & 21.07  \\
    & BERTAttack          & 411.54                & 71.49                      & 5.79                         & 36.15   & 0.9402 & 15.78  \\
    & HQA                 & 283.89                & 88.13                      & 23.57                         & 102.87  & 0.8854 & 72.16  \\
    & FastTextDodger      & 745.75                & 63.78                      & 13.29                         & 76.15   & 0.9188 & 55.14  \\
    & ABP                 & 785.18                & 75.65                      & 14.63                         & 39.61   & 0.8709 & 26.40  \\
    & T-PGD               & 354.75                & 49.90                      & 38.01                         & 181.31  & 0.8197 & 35.09  \\
    & \attackname          & \cellcolor{green!50}62.63 & \cellcolor{green!50}96.58 & 7.26              & \cellcolor{green!50}35.22  & \cellcolor{green!50}0.9506 & \cellcolor{green!50}9.21 \\
    & A2T (White Box)     & 293.26                & 45.47                      & 7.01                          & 62.84   & 0.9215 & 32.07  \\
\midrule
\multirow{6}{*}{\textbf{Zero-query}}
    & WordNet             & -                     & 42.60                      & -                               & 26.27   & 0.90   & 4.26   \\
    & Back Translation    & -                     & 2.40                       & -                               & \cellcolor{green!50}6.40   & 0.91   & \cellcolor{green!50}3.06   \\
    & Rewrite             & -                     & 36.47                      & -                               & 92.08   & 0.79   & 15.62  \\
    & T-PGD               & -                     & 62.40                      & -                               & 140.13  & 0.82   & 44.99  \\
    & HMGC                & -                     & 30.00                      & -                               & 12.94   & 0.84   & 36.90  \\
    & \attackname          & -                     & \cellcolor{green!50}69.11                      & -                               & 43.11   & \cellcolor{green!50}0.92  & 4.55   \\
\bottomrule
\end{tabular}%
}
\caption{\textbf{Attack results of the query and zero-query attack methods on the target model.} 
Note that perturbation rates are not reported for zero-query methods because the zero-query methods rewrite the whole text.
The best result in each group is highlighted with a \colorbox{green!50}{green} background.}
\label{tab:merged_attack_results}
\end{table}
\vspace{-0.1cm}

\section{Discussion}
\label{sec:discussion}

\subsection{Impact of Attack Strength of \attackname on \defensename}
In this section, we investigate how the strength of adversarial training affects the performance of both our \defensename and \attackname.
We define the attack strength as the max number of query the adversary is allowed to make in the training process.
Generally, if the adversary queries the target model more frequently, its final output tends to be more effective.
\begin{figure}[t]
    \centering
    \resizebox{0.48\textwidth}{!}{    \includegraphics{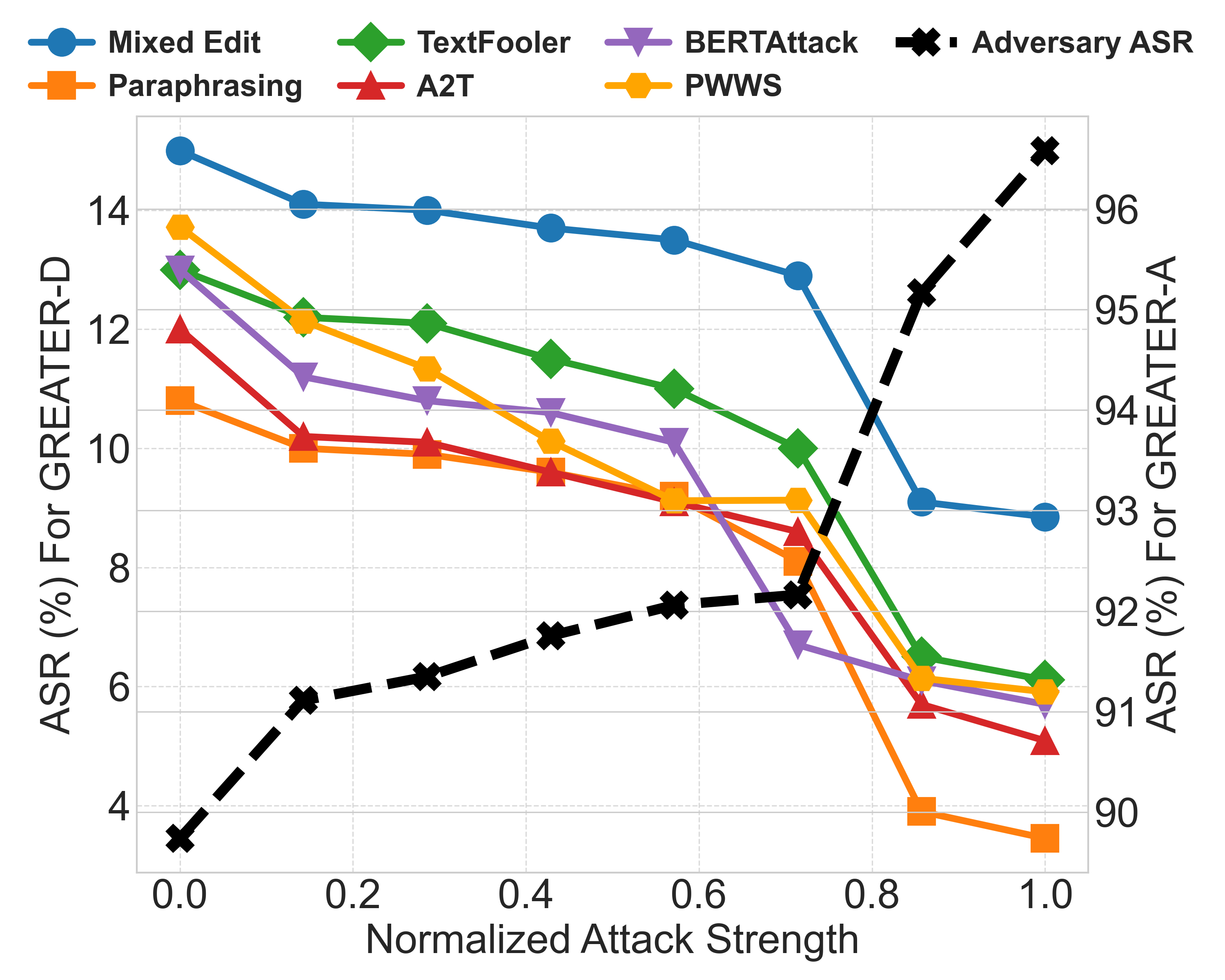} 
    }
	\caption{
    \textbf{Impact of attack strength in \modelname.}
    We normalize the attack strength for better visualization of the results.
	}
    \label{fig:compare} 
    \vspace{-0.5cm}
\end{figure}
We increase attack strength in \modelname and evaluate the performance of \defensename under 7 attacks and \attackname on the fine-tuned \texttt{XLM-RoBERTa-Base} detector and present the results in Figure~\ref{fig:compare}. 
The ASR significantly increases as the attack strength grows, which proves the rationale of the attack strength measure.
We observe that the \defensename becomes more robust with the increasing of attack strength, indicated by the decreasing ASR under all attacks.
Notably, the ASR under \textit{Paraphrasing} Attack decreases from \textbf{10.80} to \textbf{3.45} as the attack strength increases from \textbf{0.0} to \textbf{1.0}, which is \textbf{3.13} times lower.

The experimental results indicate that attack strength is a key factor influencing the robustness of the MGT detector. 
However, an increased number of queries comes with more cost on time and budget.
Thus, there exists a trade-off between the effectiveness and efficiency in adversarial training.

\subsection{Defense Adaptation to Different Backbones}
\label{sec:replacement}
To demonstrate the generalizability and effectiveness of our \modelname across different model architectures, we replace the backbone model with seven state-of-the-art transformer-based models, including both base and large variants of ALBERT~\cite{lan2019albert}, DeBERTa~\cite{he2020deberta}, RoBERTa~\cite{liu2019roberta}, and XLM-RoBERTa~\cite{conneau2019unsupervised}.
We report the ASR of each model under the Paraphrasing Attack with a budget of $0.74$, both before defense (Baseline) and after applying our \defensename{}. We compare it with the best-performing defense method CERT-ED and show the result in Table~\ref{tab:performance_comparison}.

\begin{table}[ht]
\centering
\renewcommand\arraystretch{1.4}
\footnotesize
\resizebox{0.45\textwidth}{!}{
\begin{tabular}{lcccc}
\toprule
\textbf{Model} & \textbf{Metric} & \textbf{Baseline} & \textbf{CERT-ED} & \textbf{\defensename} \\
\midrule

ALBERT Base (12M)     & \emph{ASR(\%)$\downarrow$} & 63.58   & 46.71   & \cellcolor{green!50}35.62 \\
ALBERT Large (18M)    & \emph{ASR(\%)$\downarrow$} & 97.14   & 45.45   & \cellcolor{green!50}43.40 \\
DeBERTa Base (86M)     & \emph{ASR(\%)$\downarrow$} & 22.95   & 29.37   & \cellcolor{green!50}4.10 \\
DeBERTa Large (304M)    & \emph{ASR(\%)$\downarrow$} & 51.72   & 25.07   & \cellcolor{green!50}10.25 \\
RoBERTa Base (125M)     & \emph{ASR(\%)$\downarrow$} & 81.88   & 26.78   & \cellcolor{green!50}15.32 \\
RoBERTa Large (355M)    & \emph{ASR(\%)$\downarrow$} & 30.21   & 34.66   & \cellcolor{green!50}5.38 \\
XLM-RoBERTa Large (561M) & \emph{ASR(\%)$\downarrow$} & 81.99   & 40.04   & \cellcolor{green!50}1.73 \\
\bottomrule
\end{tabular}
}
\caption{\textbf{Experiment results of various models before and after applying \defensename under Paraphrasing attack.} The best result in each group is highlighted with a \colorbox{green!50}{green} background.}
\label{tab:performance_comparison}
\end{table}

The results presented in Table~\ref{tab:performance_comparison}  demonstrate the efficacy of our adversarial training method across a diverse set of transformer-based models. Notably, \textit{XLM-RoBERTa-Large} exhibits the most substantial improvement, with ASR decreasing by 80.26\%, 38.31\% compared with baseline and CERT-ED, highlighting the significant impact of adversarial training on models with initially lower performance metrics. Similarly, both \textit{RoBERTa Base} and \textit{DeBERTa Large} demonstrate substantial improvements. Specifically, \textit{RoBERTa Base} achieves a 66.56\% reduction in ASR compared to the baseline and outperforms the CERT-ED with an additional 11.46\% decrease.
Likewise, \textit{DeBERTa Large} exhibits a 41.47\% drop in ASR relative to the baseline and surpasses CERT-ED by reducing ASR by an additional 14.82\%. 
These results underscore the robustness and versatility of our adversarial training approach across different model sizes and architectures. While smaller models like \textit{ALBERT Base} and \textit{ALBERT Large} exhibit more modest gains, the consistent upward trends across all evaluated models affirm that our adversarial training method effectively enhances model resilience and performance against Paraphrasing Attack. This versatility makes our approach a valuable tool for improving a wide range of transformer-based models in adversarial settings.

\subsection{Impact of Surrogate Model of \attackname}

We use different surrogate model and dataset for training GREATER-A to demonstrate that GREATER-A's performance is independent of surrogate model selection and training data. Specifically, we use RoBERTa-Large and GPT2 as surrogate models, and use SemEval (same dataset for training GREATER-D, abbreviated as IND) and M4~\citep{wang2024m4} (different dataset from training GREATER-D, abbreviated as IND) to validate our claim. The results are shown in Table \ref{tab:attack_results_surrogate}.

\begin{table}[ht]
\centering
\renewcommand\arraystretch{1.8}
\resizebox{0.45\textwidth}{!}{
\begin{tabular}{l l c c c c c c}
\toprule
\textbf{Surrogate Model} & \textbf{Dataset (Type)} & \textbf{Avg Queries} ↓ & \textbf{ASR (\%)} ↑ & \textbf{Pert. (\%)} ↓ & $\Delta$PPL ↓ & \textbf{USE} ↑ & $\Delta r$ ↓ \\
\midrule
RoBERTa-Large & SemEval (IND) & \cellcolor{green!50}62.63 & \cellcolor{green!50}96.58 & \cellcolor{green!50}7.26 & \cellcolor{green!50}35.22 & 0.9506 & \cellcolor{green!50}9.21 \\
RoBERTa-Large & M4 (OOD)      & 65.77 & 95.72 & 7.33 & 38.84 & 0.9503 & 9.65 \\
GPT2          & SemEval (IND) & 65.60 & 96.17 & 7.29 & 36.87 & \cellcolor{green!50}0.9550 & 12.73 \\
GPT2          & M4 (OOD)      & 63.80 & 96.36 & 7.27 & 37.14 & 0.9536 & 10.64 \\
\bottomrule
\end{tabular}
}
\caption{\textbf{The attack results of different surrogate models and datasets that the surrogate models are trained with.} The best result in each group is highlighted with a \colorbox{green!50}{green} background.}
\label{tab:attack_results_surrogate}
\end{table}

As shown in Table \ref{tab:attack_results_surrogate}, the attack effectiveness remains stable regardless of the surrogate model or training dataset. This consistency in attack success rates indicates that GREATER-A’s performance is not sensitive to changes in the surrogate model or training data, supporting the claim that the use of a surrogate model from the same family as the target model does not leak crucial information.

\section{Conclusion}
\vspace{-0.2cm}
In this paper, we proposed an adversarial training framework \textbf{GRE}edy \textbf{A}dversary
Promo\textbf{T}ed Defend\textbf{ER} (\modelname) to enhance the robustness of MGT detector under different text perturbation and adversarial attacks.
We design a novel attack strategy for the adversary including Identify \& Perturb and Replace \& Refine to construct effective adversarial examples efficiently.
In \modelname, we update the adversary and the detector alternatively in the same training step for better defense generalization.
Our experiment results demonstrate the efficacy of our detector \defensename under 16 attacks along with the leading performance of adversary \attackname compared with 13 methods.
The discussion on the relationship between attack strength and defense performance reveals the importance of a powerful adversary in adversarial training for robust MGT detectors.

\section*{Limitations}

Despite the promising results achieved by \modelname, there are three primary limitations to this study. 
\textbf{First}, our defense method can generalize to different attacks but its application on texts in different languages remains a challenge. 
\textbf{Second}, the computational cost of training the adversarial framework, particularly for the adversary and detector, is substantial, requiring significant hardware resources that could limit its deployment in resource-constrained settings.
\textbf{Third}, the detector presented in our work is only able to attribute the origin of the text on the document-level.
However, more works are needed for fine-grained (\eg sentence-level, token-level) detection in Human-AI co-authored texts.

\section*{Ethics Statement}

This study seeks to improve the robustness and security of machine-generated text (MGT) detection, with a focus on defending against adversarial threats. While the proposed \modelname\ framework enhances detection capabilities, it also introduces potential risks of misuse, such as creating adversarial examples to evade detection systems. To mitigate such risks, our experiments were conducted in controlled environments, and details that could enable misuse were abstracted. We emphasize that this work is intended solely for advancing detection technologies and defending against malicious applications. Ethical use of these findings is imperative, and any misuse for harmful purposes is strongly discouraged.
The artifacts used in our work are all under the restriction of the license.
\section*{Acknowledgment}
We thank all the reviewers and the area chair for their helpful feedback, which aided us in greatly improving the paper.
This work is supported by National Key R\&D Program (2023YFB3107400), National Natural Science Foundation of China (62272371, 62103323, U24B20185, T2442014, 62161160337, 62132011, U21B2018), Initiative Postdocs Supporting Program (BX20190275, BX20200270), China Postdoctoral Science Foundation (2019M663723, 2021M692565), Fundamental Research Funds for the Central Universities under grant (xzy012024144), and Shaanxi Province Key Industry Innovation Program (2023-ZDLGY-38, 2021ZDLGY01-02). Thanks to the New Cornerstone Science Foundation and the Xplorer Prize.

\bibliography{GENSHIN}     

\clearpage
\newpage
\appendix
\section{Experimental Setting}
\label{apdx:expset}

\subsection{Dataset}
\label{apdx:dataset}

We used the Semeval Task8 dataset \cite{siino2024badrock} as the primary dataset for training the detector.
This large-scale dataset includes MGTs from ChatGPT 4, Davinci, Bloomz, Dolly, and Cohere, as well as HWTs from WikiHow, Reddit, arXiv, Wikipedia, and PeerRead.
Moreover, the average token length of the dataset is 623.2521. For each scenario, we employed distinct datasets for testing, as detailed in Tabel \ref{tab:exp1}.

\begin{table}[h]
\centering
\renewcommand{\arraystretch}{1.8}
\resizebox{0.45\textwidth}{!}{
\begin{tabular}{ccc}
\hline
\textbf{Scenario}              & \textbf{Dataset}                                      & \textbf{Number of MGTs} \\ \hline
Mixed Edit & Semeval Task8 \cite{siino2024badrock}                                              & 1000  \\ \hline
HMGC & Semeval Task8 \cite{siino2024badrock}                                              & 1000 \\ \hline
Paraphrasing & Semeval Task8 \cite{siino2024badrock}                                              & 1000 \\ \hline
Code-switching          & Semeval Task8 \cite{siino2024badrock}                                              & 1000 \\ \hline
Human Obfuscation  & Semeval Task8 \cite{siino2024badrock}                                              & 1000 \\ \hline
Emoji-cogen           & Wang et al. \cite{wang2024stumbling}                & 500 \\ \hline
Typo-cogen           & Wang et al. \cite{wang2024stumbling}                & 500 \\ \hline
ICL           & Wang et al. \cite{wang2024stumbling}                & 500 \\ \hline
Prompt Paraphrasing            & Wang et al. \cite{wang2024stumbling}                & 500  \\ \hline
CSGen           & Wang et al. \cite{wang2024stumbling}                & 500  \\ \hline
Adversarial Attack             & Semeval Task8 \cite{siino2024badrock}                                              & 500 \\ \hline
\end{tabular}
}
\caption{Experimental scenarios and corresponding datasets.}
\label{tab:exp1}
\end{table}

\subsection{Implementation}
\label{appdx:implementation}
\modelname\ is deployed on a server equipped with 4 NVIDIA A100 GPUs, running on Ubuntu 22.04. The adversarial framework uses the xlm-roberta-base model (279M) as its base detector.
For the evaluation of Mixed Edit Attack, Paraphrasing Attack, Code-switching Attack, and Human Obfuscation, we adopt the concept of "budget" to control the intensity of the attacks for a more fine-grained investigation of model robustness, following the methodology in \citet{wang2024stumbling}.
Specifically, Mixed Edit Attack uses character edit distance \cite{levenshtein1966binary} as the budget, Paraphrasing Attack and Code-switching Attack utilize BERTScore \cite{zhang2019bertscore} as the budget, and Human Obfuscation Attack employs the confusion ratio as the budget.

For all defense methods, we use \texttt{xlm-roberta-base} as the base detector. The sizes of the training set, validation set, and test set are 8000, 1000, and 1000, respectively. The learning rate is set to $1\text{e-}5$, and the number of epochs is fixed at 6.

For all data augmentation-based methods, we use 20\% of MGT for data augmentation. If the method has hyperparameters, they are set according to the reference values in the original paper.
For all adversarial training-based methods, we use 20\% of MGT for adversarial training. If the method has hyperparameters, they are set according to the reference values in the original paper.
For our method, our detector is trained using a label smoothing loss function with a smoothing factor of $\alpha$. We use a trained \texttt{RoBERTa Large} as the Surrogate Model $\mathcal{M}_{\text{sur}}(.)$ due to its strong generalization ability and precise understanding of English text.
Our method's selected hyperparameters are shown in Table~\ref{tab:hyperparameters}.

\begin{table}[h]
    \centering
    \resizebox{0.5\textwidth}{!}{ % Scale the table to 50% height
        \begin{tabular}{cc}
            \toprule
            Hyperparameter & Value \\
            \midrule
            Weight Parameter $\lambda$ & 0.05 \\
            Scaling Factor $\epsilon$ & 0.3 \\
            Scaling Factor $\xi$ & 0.01 \\
            Lower Bound of the Uniform Distribution $a$ & 0.5 \\
            Upper Bound of the Uniform Distribution $b$ & -0.5 \\
            Batch Size $M$ & 50 \\
            Epoch $N$ & 6 \\
            Label Smoothing Factor $\alpha$ & 0.1 
            \\
            \bottomrule
        \end{tabular}
    }
    \caption{Hyperparameters for our \modelname.}
    \label{tab:hyperparameters}
\end{table}

\subsection{Evaluation Metrics}
\label{apdx:metric}
We use attack effectiveness metrics and text quality metrics to comprehensively evaluate the performance of defense and attack methods.

\noindent\textit{1) Attack Effectiveness Metrics}

\noindent\textbf{Attack Success Rates (ASR).} The Attack Success Rate (ASR) measures the proportion of successful attacks relative to the total number of attempted attacks.
Note that we only attack text that was detected as machine-written before the attack.
ASR is calculated as follows:
\[
\text{ASR} = \frac{\text{Text detected as HWT after attack}}{\text{Text detected as MGT before attack}}.
\]
For detector, a lower ASR indicates better defense performance, while for adversary, a higher ASR signifies a stronger attack effectiveness.

\noindent\textit{2) Text Quality Metrics}

\noindent\textbf{Perturbation Rate  ($\mathrm{Pert.}$).} $\mathrm{Pert.}$ measures the lexical difference between the adversarial text and the original text. 
It is defined as the ratio of the number of perturbed tokens to the total number of tokens in the text. 
A lower perturbation rate indicates that the adversarial text remains more similar to the original text.

\noindent\textbf{Perplexity Variation ($\Delta\mathrm{PPL}$).} $\Delta\mathrm{PPL}$ measures the change of perplexity, which represents the consistency and fluency of the adversarial examples.
The PPL is calculated as:
\[
\mathrm{PPL} = exp(-\frac{1}{N} \sum_{i=1}^{N} \log P(w_i | w_1, w_2, \dots, w_{i-1})),
\]
where \(N\) is the total number of tokens, \(w_i\) represents the \(i\)-th token, and \(P(w_i | w_1, w_2, \dots, w_{i-1})\) is the probability assigned by the language model. Generally, a lower PPL variation indicates that the quality of the adversarial text is closer to that of the original text. We use GPT-2~\cite{radford2019language} to compute PPL.

\noindent\textbf{Universal Sentence Encoder (USE) score\cite{cer2018universal}. }
USE evaluates the semantic similarity between the adversarial example and the original text.
The USE score is computed as:
\[
\text{USE Score} = \frac{\mathbf{E}_{\text{orig}} \cdot \mathbf{E}_{\text{adv}}}{\|\mathbf{E}_{\text{orig}}\| \|\mathbf{E}_{\text{adv}}\|},
\]
where \( \mathbf{E}_{\text{orig}} \) and \( \mathbf{E}_{\text{adv}} \) are the sentence embeddings of the original and adversarial texts, respectively, generated by the USE model, and \( \|\mathbf{E}\| \) denotes the Euclidean norm. A higher USE score indicates greater semantic similarity. We use USE (Universal Sentence Encoder) to calculate USE score.

\noindent\textbf{Flesch Reading Ease score (\(\Delta r\)) \cite{flesch1948new}.}
$\Delta r$ measures the variation in text readability. The Flesch Reading Ease score is calculated as:

\[
r = 206.835 - 1.015 \times \frac{N_{\text{words}}}{N_{\text{sentences}}} - 84.6 \times \frac{N_{\text{syllables}}}{N_{\text{words}}},
\]
where \( N_{\text{words}} \) is the total number of words in the text, \( N_{\text{sentences}} \) is the total number of sentences, and \( N_{\text{syllables}} \) is the total number of syllables. A smaller \( \Delta r \) indicates less change in readability.

\subsection{Experimental Comparison Methods}
This section provides a detailed introduction to the SOTA methods included in our comparisons.

\subsubsection{Defense Methods}
\label{apdx:defmethod}

\noindent \textbf{Edit Pretraining (EP) \cite{wang2024comprehensive}:} 
An adversarial data augmentation method that blends Mixed Edit Attack into the training set.

\noindent \textbf{Paraphrasing Pretraining (PP) \cite{wang2024comprehensive}:} An adversarial data augmentation method that blends Paraphrasing Attack into the training set.

\noindent
\textbf{CERT-ED \cite{huang2024cert}:}
An adversarial data augmentation method that adapts randomized deletion to effectively safeguard natural language classification models from diverse edit-based adversarial operations, including synonym substitution, insertion, and deletion.

\noindent
\textbf{RanMask \cite{zeng2023certified}:}
An adversarial data augmentation method that randomly masks a proportion of words in the input text, thereby mitigating both word- and character-level adversarial perturbations without assuming prior knowledge of the adversaries’ synonym generation.

\noindent
\textbf{Text-RS \cite{zhang2024random}:}
An adversarial data augmentation method that treats discrete word substitutions as continuous perturbations in the embedding space to reduce the complexity of searching through large discrete vocabularies and bolster the model’s robustness.

\noindent
\textbf{Text-CRS \cite{zhang2024text}:}
An adversarial data augmentation method that is built on randomized smoothing, encompassing various word-level adversarial manipulations—such as synonym substitution, insertion, deletion, and reordering—by modeling them in both embedding and permutation spaces.

\noindent
\textbf{Virtual Adversarial Training (VAT) \cite{miyato2016virtual}:}
An Adversarial Training method that defines the adversarial direction without label information and employs the robustness of the conditional label distribution around each data point against local perturbation as the adversarial loss.

\noindent
\textbf{Token Aware Virtual Adversarial Training (TAVAT) \cite{li2021tavat}:}
An Adversarial Training method that uses token-level accumulated perturbation to better initialize the noise and applies token-level normalization.

\noindent
\textbf{RADAR~\cite{hu2023radar}:} An Adversarial Training method that uses a paraphraser (such as DIPPER) as adversary and a fine-tuned model as detector. The adversary and detector learn from each other and improve the robustness of the detector when facing paraphrasing attack.

\noindent  
\textbf{OUTFOX~\cite{koike2024outfox}:} An adversarial training method that iteratively improves a detector by generating in-context adversarial examples via a co-evolving attacker. The attacker crafts examples to evade the detector, which are then used to strengthen the detector through in-context learning.

\subsubsection{Attack Methods}
\label{sec:attack}
\noindent
\textbf{TextFooler~\cite{jin2020bert}:} 
A black-box attack method targeting text classification and natural language inference tasks. It ranks words by their importance to the model's prediction and replaces them with semantically similar and grammatically correct synonyms to generate adversarial examples while preserving sentence meaning and structure. 

\noindent
\textbf{BERTAttack~\cite{li2020bert}:} 
A black-box attack method that utilizes a pre-trained BERT model to generate adversarial examples. It replaces certain words in the target sentence with high-probability candidates predicted by BERT, ensuring the adversarial examples remain semantically and grammatically similar to the original while misleading the model. 

\noindent
\textbf{PWWS~\cite{ren2019generating}:} 
A black-box attack method designed for text classification models. It computes the saliency of each word in the sentence, ranks them accordingly, and replaces high-saliency words with synonyms from WordNet, generating adversarial examples that mislead the model while preserving meaning. 

\noindent
\textbf{A2T~\cite{yoo2021towards}:} 
A white-box attack method aimed at improving adversarial training. It leverages gradient-based word importance estimation, performing a greedy search from least to most important words, and uses word embedding models or masked language models to generate candidate replacements, ensuring adversarial examples remain semantically similar while misleading the model. 

\noindent
\textbf{HQA~\cite{liu2024hqa}:} 
A black-box adversarial attack method for text classification task. It initializes adversarial examples by minimizing perturbations and iteratively substitutes words using synonym sets to optimize both semantic similarity and adversarial effectiveness while reducing query consumption. 

\noindent
\textbf{ABP~\cite{yu2024query}:} 
A black-box adversarial attack method leveraging prior knowledge to guide word substitutions efficiently. It introduces Adversarial Boosting Preference (ABP) to rank word importance and proposes two query-efficient strategies: a query-free attack (ABPfree) and a guided search attack (ABPguide), significantly reducing query numbers while maintaining high attack success rates. 

\noindent
\textbf{T-PGD~\cite{yuan2023bridge}:} A black-box  zero-query adversarial attack method extending optimization-based attack techniques from computer vision to NLP. It applies perturbations to the embedding layer and amplifies them through forward propagation, then uses a masked language model to decode adversarial examples. 

\noindent
\textbf{FastTextDodger~\cite{hu2024fasttextdodger}:} A black-box adversarial attack designed for query-efficient adversarial text generation. It generates grammatically correct adversarial texts while maintaining strong attack effectiveness with minimal query consumption. 

\subsection{Detailed Information of Text Perturbation Method}
\label{sec:pert}
In this section, we introduce 10 Text Perturbation Methods mentioned in Table~\ref{tab:all_results_modified}. Detailed information of Adversarial Attack Methods can be found in Appendix~\ref{sec:attack}.

\noindent
\textbf{Mixed Edit~\cite{wang2024stumbling}:} A text modification strategy combining homograph substitution, formatting edits, and case conversion to evade detection. It manipulates character representation, encoding, and capitalization to introduce imperceptible variations while preserving readability.
\textbf{Homograph Substitution} exploits visually similar graphemes, characters, or glyphs with different meanings for imperceptible text modifications. We use VIPER~\cite{eger2019text} and Easy Character Embedding Space (ECES) to obtain optimal homoglyph alternatives.
\textbf{Formatting Edits} introduces human-invisible disruptions using special escape characters and format-control Unicode symbols to evade detection. We employ newline (\texttt{\textbackslash n}), carriage return (\texttt{\textbackslash r}), vertical tab (\texttt{\textbackslash v}), zero-width space (\texttt{\textbackslash u200B}), and line tabulation (\texttt{\textbackslash u000B}) to fragment text at the encoding level while preserving visual coherence.
\textbf{Case Conversion} alters letter capitalization within a word by converting uppercase letters to lowercase and vice versa. For example, transforming \texttt{PaSsWoRd} into \texttt{pAsSwOrD} disrupts case-sensitive detection while preserving readability.

\noindent
\textbf{HMGC~\cite{zhou2024humanizing}:} A black-box zero-query attack framework. It uses a surrogate model to approximate the detector, ranks words by gradient sensitivity and PPL, and replaces high-importance words via an encoder-based masked language model. Constraints ensure fluency and semantic consistency, while dynamic adversarial learning refines the attack strategy.

\noindent
\textbf{Paraphrasing~\cite{wang2024stumbling}:} A paragraph-level attack that reorganizes sentence composition to hinder detection. It utilizes Dipper~\cite{krishna2024paraphrasing} to reorder, merge, and split multiple sentences, increasing textual variance while preserving meaning.

\noindent
\textbf{Code-Switching~\cite{winata2023decades}:} A linguistic modification strategy that substitutes words with their synonyms in different languages. It includes a model-free (MF) approach using a static dictionary~\cite{zhang-etal-2020-improving, tiedemann-2012-parallel} for replacements in German, Arabic, or Russian, and a model-required (MR) approach employing the Helsinki-NLP~\cite{helsinki-nlp} model to translate selected words.

\noindent
\textbf{Human Obfuscation:} A semantic alteration technique inspired by Semeval Task8~\cite{siino2024badrock}, where the initial segment of MGT is replaced with an equally long HWT. The confusion ratio measures the extent of content substitution to increase ambiguity.

\noindent
\textbf{Emoji-Cogen~\cite{wang2024stumbling}:} A co-generation attack method that inserts emojis into text generation to perturb the output. Emojis are introduced immediately after a token is sampled, before generating the next token, and are removed post-generation, ensuring natural readability while confusing automated detectors.

\noindent
\textbf{Typo-Cogen~\cite{wang2024stumbling}:} A co-generation attack method that introduces typos during text generation to manipulate lexical structure. Artificial typos are injected into the generated text and subsequently corrected post-generation, preserving overall coherence while disrupting detection models.

\noindent
\textbf{In-Context Learning (ICL)~\cite{wang2024stumbling}:} A prompt attack method designed to produce human-like outputs that evade detection. It provides the generator with a related HWT as a positive example and a vanilla MGT as a negative example, guiding the model to generate more natural and deceptive text.

\noindent
\textbf{Prompt Paraphrasing~\cite{wang2024stumbling}:} A prompt attack method rewriting technique that enhances textual variation while maintaining semantic integrity. It utilizes the Pegasus paraphraser~\cite{zhang2020pegasus} to restructure input prompts.

\noindent
\textbf{Character-Substituted Generation (CSGen)~\cite{wang2024stumbling}:} A prompt attack method that incorporates character substitution strategies within the prompt. The prompt explicitly specifies replacement rules, such as substituting all occurrences of ‘e’ with ‘x’ during generation. For example, given the prompt: \textit{“Continue 20 words with all ‘e’s substituted with ‘x’s and all ‘x’s substituted with ‘e’s: The evening breeze carried a gentle melody…”}, the model generates text following these constraints. A post-processing step then restores the original characters, ensuring a natural final output.

\section{Experiment on Defense}

\subsection{Defense Performance under Different Attack Strengths}

\begin{figure*}[htbp]
    \centering
    \resizebox{1\textwidth}{!}{%
        \includegraphics{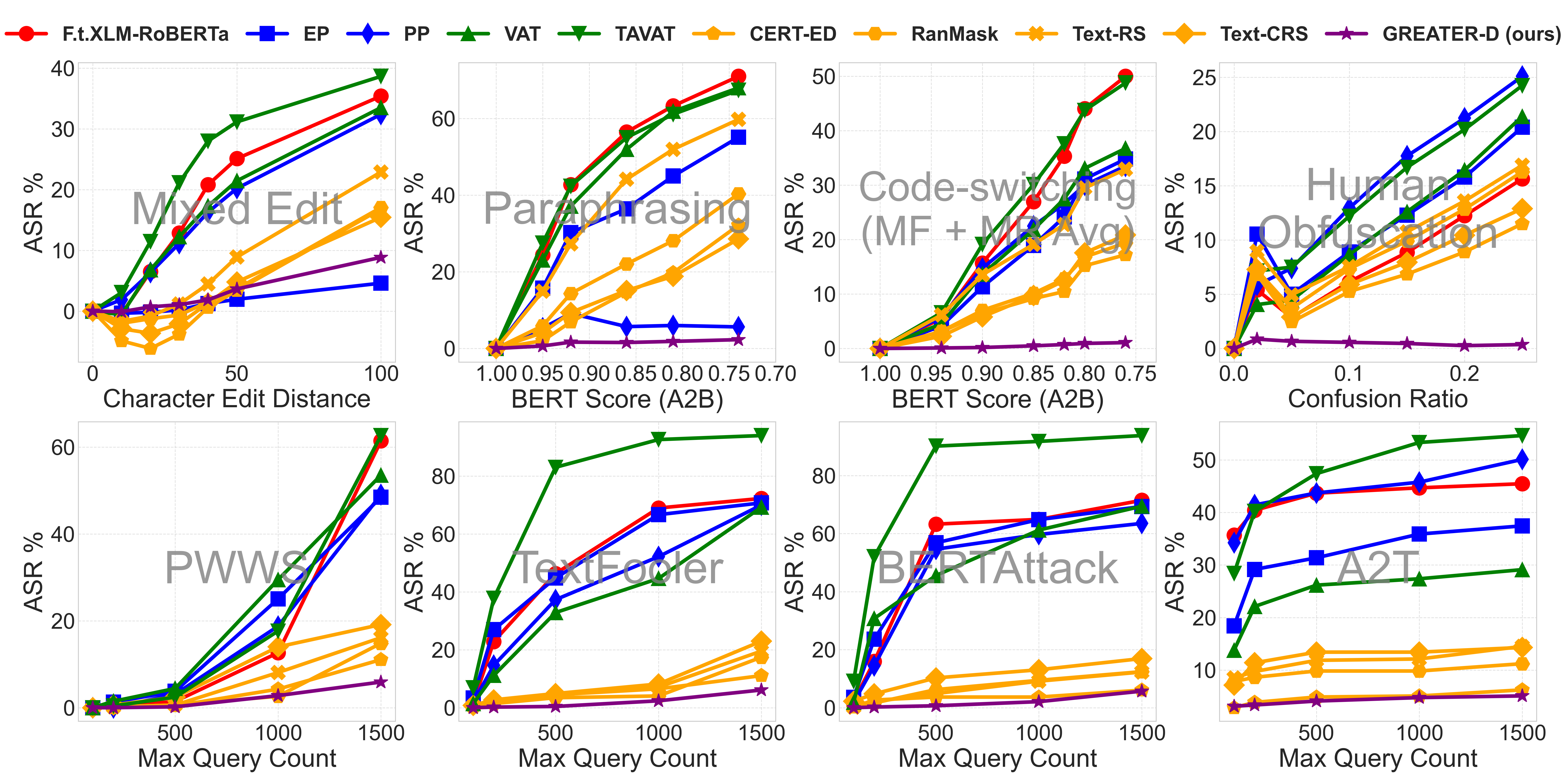} 
    }
    \caption{\textbf{Defense performance under attack with different strengths.} A lower ASR(\%) indicates better defensive performance. A larger character edit distance indicates greater attack intensity in the Mixed Edit Attack. A lower BERT score corresponds to stronger attacks in the Paraphrasing Attack and Code-Switching Attack. A higher obfuscation ratio reflects greater intensity in the Human Obfuscation Attack. Similarly, for PWWS, BERTAttack, TextFooler, and A2T, a larger maximum query count signifies a stronger attack. 
    }
    \label{fig:diffstrength} 
\end{figure*}

We evaluate the resistance of defense methods to increasing attack strengths.
For text perturbation strategies, we employ four text perturbation strategies: Mixed Edit, Paraphrasing, Code-Switching, and Human Obfuscation in the experiment.
Following \citet{wang2024stumbling}, we use Character Edit Distance, BERT Score, BERT Score, and Confusion Ratio as the measure of attack strength for the methods mentioned above, respectively.
For adversarial attacks, we choose PWWS TextFooler, BERTAttack, and A2T to attack MGT detectors, and we utilize max query count to quantify the attack strength.
In the implementation, we change the limit on the attack strength measures to vary the attack intensity.
We show the experimental results in Figure \ref{fig:diffstrength}.

Our experimental results demonstrate that as the attack strength increases, the ASR on our \defensename remains consistently close to zero, whereas other defense methods exhibit significant vulnerabilities under more intensive adversarial attacks.
However, it is worth noticing that under Mixed Edit perturbation, \defensename performs second-best after EP.
This is because EP is originally trained on Mix Edit perturbation and obtains stronger defense against it.
The consistent effective defense against varying attack strengths proves the steadiness of our defense method.

\subsection{Impact of Synchronous Update of \defensename}
The synchronous update mechanism plays a crucial role in enhancing the robustness of the target detector. To explicitly demonstrate its contribution, we ablate the generator-side updates and instead use static texts to update the GREATER-D detector. The results are presented in Table \ref{tab:greater_d_epochs}.
\begin{table}[ht]
\centering
\resizebox{0.45\textwidth}{!}{
\begin{tabular}{c c c}
\toprule
\textbf{Epoch} 
& \makecell{\textbf{GREATER-D} \\ \textbf{(w/o synchronous update)}} 
& \makecell{\textbf{GREATER-D} \\ \textbf{(w/ synchronous update)}} \\
\midrule
1 & 59.04 & 57.64 \\
2 & 52.17 & 46.08 \\
3 & 45.52 & 21.85 \\
4 & 40.17 & 10.28 \\
5 & 34.52 & 4.77 \\
6 & 34.08 & 3.45 \\
\bottomrule
\end{tabular}
}
\caption{\textbf{Defense performance in ASR (\%) of GREATER-D against paraphrasing attacks across different training epochs.}}
\label{tab:greater_d_epochs}
\end{table}

As shown in Table \ref{tab:greater_d_epochs}, the robustness of the GREATER-D detector significantly degrades when it is trained with static adversarial examples. The ASR converges to approximately 34\% after around 5 epochs, which is substantially worse than the result achieved with dynamic updates. This demonstrates the critical role of the dynamic update mechanism in enhancing model robustness.

\subsection{Defense Performance of Different Epochs}

To demonstrate the effectiveness of the adversarial training procedure, we analyze the GREATER-A and GREATER-D performance after each training round. The detailed analysis of training dynamics is in the Table~\ref{tab:greater_a_d_asr}. We present the ASR of GREATER-A and defense performance of GREATER-D against Paraphrasing Attack at each epoch from 1 to 6. We find that as training progresses, the GREATER-A becomes stronger at attacking XLM-Roberta detector. Moreover, GREATER-D exhibits more robust defense performance on the attack it is not trained with.

\begin{table}[ht]
\centering
\resizebox{0.45\textwidth}{!}{
\begin{tabular}{c c c}
\toprule
\textbf{Epoch} & \textbf{GREATER-A ASR} ↑ & \textbf{GREATER-D ASR} ↓ \\
\midrule
1 & 77.04 & 57.64 \\
2 & 79.03 & 46.08 \\
3 & 84.61 & 21.85 \\
4 & 88.94 & 10.28 \\
5 & 94.06 & 4.77 \\
6 & 96.58 & 3.45 \\
\bottomrule
\end{tabular}
}
\caption{ASR (\%) of the Adversary against F.t.XLM-RoBERTa-Base detector and the Detector against Paraphrasing Attack across different epochs.}
\label{tab:greater_a_d_asr}
\end{table}

\subsection{Experiments on Multilingual data}

We evaluate the multilingual performance of different defense methods using German and Urdu datasets~\cite{siino2024badrock}, and the results are shown in Table \ref{tab:multilingual_defense}. As observed from the table, our model achieves strong performance on both languages and yields the best overall results, indicating its effectiveness in multilingual detection scenarios. 

\begin{table*}[ht]
\centering
\resizebox{1\textwidth}{!}{
\begin{tabular}{c c c c c c c c c c c c}
\toprule
\textbf{Source} & \textbf{Metric} & \textbf{F.t.XLM-RoBERTa} & \textbf{EP} & \textbf{PP} & \textbf{VAT} & \textbf{TAVAT} & \textbf{CERT-ED} & \textbf{RanMask} & \textbf{Text-RS} & \textbf{Text-CRS} & \textbf{GREATER-D} \\
\midrule
\multirow{2}{*}{German} 
& \textit{Acc} & 68.30 & 59.84 & \cellcolor{green!50}{73.06} & 67.44 & 59.37 & 55.60 & 61.13 & 65.04 & 56.80 & 72.86 \\
& \textit{F1}  & 75.36 & 71.13 & \cellcolor{green!50}{78.40} & 75.30 & 71.08 & 68.90 & 71.40 & 73.51 & 69.55 & 77.76 \\
\midrule
\multirow{2}{*}{Urdu} 
& \textit{Acc} & 56.01 & 55.09 & 56.88 & 63.33 & 59.04 & 66.60 & 64.03 & \cellcolor{green!50}{70.62} & 70.21 & 66.24 \\
& \textit{F1}  & 65.27 & 69.14 & 69.85 & 69.07 & 68.56 & 74.97 & 69.81 & 76.10 & \cellcolor{green!50}{76.21} & 74.89 \\
\midrule
\multirow{2}{*}{Overall} 
& \textit{Acc} & 62.16 & 57.47 & 64.97 & 65.39 & 59.21 & 61.10 & 62.58 & 67.83 & 63.51 & \cellcolor{green!50}{69.55} \\
& \textit{F1}  & 70.32 & 70.14 & 74.13 & 72.19 & 69.82 & 71.94 & 70.61 & 74.81 & 72.88 & \cellcolor{green!50}{76.33} \\
\bottomrule
\end{tabular}
}
\caption{Performance of different defense methods in a multilingual setting. The best result in each group is highlighted with a \colorbox{green!50}{green} background.}
\label{tab:multilingual_defense}
\end{table*}

\section{Evaluations on \attackname Performance}

\subsection{Attack on the real-world detector}

Speaking of the fact that the adversary and target model share the same training dataset (but different sets), it is necessary to evaluate if the attack conducted by \attackname would generalize to other detectors which trained on different datasets. To validate this, we attack a close-sourced commercial detector GPTZero\footnote{\url{https://gptzero.me/}} with \attackname, which is a strict black-box setting because no other information except for the input and output of the target model is available. The results are in the Table \ref{tab:attack_performance_comparison}. It demonstrates that our \attackname remains high ASR compared with other SOTA methods, though we are not aware of the architecture, parameters, or training data at all.

\begin{table}[ht]
\centering
\renewcommand\arraystretch{1.4}
\resizebox{0.45\textwidth}{!}{
\begin{tabular}{lcccccc}
\toprule
\textbf{Method} 
& \textbf{Avg Queries} $\downarrow$ 
& \textbf{ASR (\%)} $\uparrow$ 
& \textbf{Pert. (\%)} $\downarrow$
& \textbf{\(\Delta\mathrm{PPL}\)} $\downarrow$
& \textbf{USE} $\uparrow$
& \textbf{\(\Delta r\)} $\downarrow$\\
\toprule

HQA 
& 172.38 
& 63
& 11.23 
& 133.58 
& 0.8312 
& 66.24 \\

FastTextDodger 
& 162.58 
& 70
& 10.58
& 96.25 
& 0.9096 
& 42.15 \\

ABP 
& 133.46
& 82 
& 9.13 
& 72.78 
& 0.8613 
& 36.18 \\

T-PGD 
& 189.36 
& 42
& 13.12 
& 168.54 
& 0.8077 
& 60.51 \\

\attackname (ours) 
& \cellcolor{green!50}42.57 
& \cellcolor{green!50}100
&  \cellcolor{green!50} 7.82 
& \cellcolor{green!50} 67.13 
& \cellcolor{green!50} 0.9393 
& \cellcolor{green!50} 13.39 \\

\bottomrule
\end{tabular}
}
\caption{\textbf{The attack results of different attack methods on the GPTZero detector.} For each sample, the maximum number of queries is limited to 200. The best result in each group is highlighted with a \colorbox{green!50}{green} background.}
\label{tab:attack_performance_comparison}
\end{table}

\subsection{Generalization to Other Detector}

To demonstrate the generalization of our \attackname, we conduct additional experiments with two zero-shot detectors: Fast-DetectGPT \cite{bao2023fast} and Binoculars \cite{hans2024spotting}. Moreover, to further demonstrate the generalization ability of \attackname, we also test it on a decoder-only detector F.t.GPT2~\cite{radford2019language} which shares a different architecture with XLM-RoBERTa. As shown in the Table~\ref{tab:asr_detectors}, \attackname outperforms all comparison methods when attacking zero-shot and decoder-only detectors, which demonstrates the universal effectiveness of \attackname.

\begin{table}[ht]
\centering
\renewcommand\arraystretch{1.4}
\resizebox{0.45\textwidth}{!}{
\begin{tabular}{l c c c}
\toprule
\textbf{Method} & \textbf{Fast-DetectGPT} & \textbf{Binoculars} & \textbf{F.t.GPT2} \\
\midrule
WordNet           & 51.36 & 37.35 & 42.66 \\
Back Translation  & 37.50 & 19.13 & 1.26 \\
Rewrite           & 10.97 & 8.88 & 30.02 \\
T-PGD             & 56.83 & 41.91 & 70.46 \\
HMGC              & 17.39 & 31.43 & 35.08 \\
\attackname (ours)         & \cellcolor{green!50}61.14 & \cellcolor{green!50}65.83 & \cellcolor{green!50}74.25 \\
\bottomrule
\end{tabular}
}
\caption{\textbf{The ASR (\%) across different detectors under different attack methods.} The best result in each group is highlighted with a \colorbox{green!50}{green} background.}
\label{tab:asr_detectors}
\end{table}

\section{Ablation Study}

\begin{table}[ht]
\centering
\renewcommand\arraystretch{1.4}
\resizebox{0.45\textwidth}{!}{
\begin{tabular}{ccccccc}
\toprule
\textbf{Method} & \textbf{Avg Queries $\downarrow$} & \textbf{ASR (\%) $\uparrow$} & \textbf{Pert. (\%) $\downarrow$} & \textbf{$\Delta$PPL $\downarrow$} & \textbf{USE $\uparrow$} & \textbf{$\Delta$r $\downarrow$} \\ 
\toprule

\textbf{R+NP} & \cellcolor{green!50}26.61 & 75.77 & 11.56 & 106.29 & 0.9324 & 14.41 \\

\textbf{R+P} & 53.22 & 75.77 & 9.51 & 58.72 & 0.9497 & 12.13 \\

\textbf{S+NP} & 31.32 & \cellcolor{green!50}96.58 & 11.28 & 85.18 & 0.9136 & 21.80 \\

\midrule

\textbf{Mask-T} & 303.82 & 96.38 & 8.33 & \cellcolor{green!50}27.12 & \cellcolor{green!50}0.9696 & \cellcolor{green!50}4.73 \\

\textbf{GREATER-W} & 33.21 & 96.38 & 11.49 & 65.16 & 0.9482 & 10.04 \\

\textbf{GREATER-WordNet} & 28.41 & 80.89 & 16.68 & 53.82 & 0.9199 & 13.21 \\

\textbf{GREATER-A} & 62.63 & \cellcolor{green!50}96.58 & \cellcolor{green!50}7.26 & 35.22 & 0.9506 & 9.21 \\
\bottomrule

\end{tabular}
}
\caption{\textbf{Ablation study on the \attackname.}
The best result in each group is highlighted with a \colorbox{green!50}{green} background.}
\label{tab:ablation_study}

\end{table}

To demonstrate the effectiveness of each component in the design of \defensename and \attackname, we conduct ablation experiments. 
The ablation models are as follows:

\noindent\textbf{R+NP:} Randomly select tokens to perturb and not apply pruning.

\noindent\textbf{R+P:} Randomly select tokens to perturb and apply pruning.

\noindent\textbf{S+NP:} Select tokens to perturb with the important token identification module but not apply pruning.

\noindent\textbf{Mask-T:} Select tokens to perturb with the important token identification but mask them instead of adding perturbation.

\noindent\textbf{GREATER-W:} Select words to perturb with the important token identification module instead of tokens.

\noindent\textbf{GREATER-WordNet:} Select words to perturb with the important token identification module and substitute them with synonyms.

As shown in Table~\ref{tab:ablation_study}, the original \attackname exhibits the most balanced and effective performance in generating adversarial examples.
Comparing \textbf{R+P} to \attackname, we find the ASR drops by \textbf{20.81}, indicates that identifying important tokens is of great significance for the attack to be successful.
Moreover, greedy pruning is important for maintaining the text quality of adversarial examples.
\textbf{S+NP} achieves the same ASR with \attackname but is left far behind in terms of PPL, USE, and $\Delta$r.
The perturbation strategy also plays a crucial role in the adversarial attack.
Masking the important token instead of perturbing the embedding greatly increases the number of queries.
Applying perturbation on the word level or substituting important words with synonyms also degrades the performance of the adversary.

\section{Mathematical Analysis of Perturbation Rate and Query Complexity}
\label{sec:analysis}

In this section, we provide the theoretical analysis of our proposed adversarial example generation framework, focusing on two crucial metrics: \textit{(i)} the \textbf{perturbation rate}, which characterizes the fraction of modified tokens in an adversarial example; and \textit{(ii)} the \textbf{query complexity}, which measures the number of queries made to the target detector in a black-box setting. We establish strict upper and lower bounds on both metrics, illustrating the efficiency and effectiveness of our method.

\subsection{Perturbation Rate Analysis}

\noindent
\begin{definition}
\label{def:randomvars}
Let \( Z \) denote the maximum number of tokens allowed to be modified. Let \(P\) be the total number of tokens (out of \(T\)) perturbed by \emph{greedy search}, taking integer values in the interval \([1, Z]\), where \(1 \le P \le Z\). Let \(Y\) be the fraction of those \(P\) tokens retained (not reverted) by \emph{greedy pruning}, taking real values in \([0, 1]\). Formally, 
\[
  P \;\in\;\{1,2,\dots,Z\},
  \qquad
  Y \;\in\;[\,0,\,1\,].
\]
We assume \(P\) and \(Y\) exhibit \emph{weak dependence}, meaning they have a small but nonzero covariance \(\mathrm{Cov}(P, Y)\).
\end{definition}

\begin{definition}
\label{def:perturbation_rate}
Let \(P \cdot Y\) be the expected count of tokens that remain modified. The \emph{perturbation rate $\rho$} is defined as
\[
  \rho = \frac{P \cdot Y}{T},
\]
where \(T\) is the length of the original text.
\end{definition}

\begin{theorem}
\label{thm:bounds1}
Let \(P\) and \(Y\) modeled as truncated normal random variables on \([1,Z]\) and \([0,1]\), respectively. Then under mild assumptions on the truncation intervals, we have
\[
  \mathbb{E}[\,P\,] \;\approx\; \frac{Z+1}{2}
  \quad\text{and}\quad
  \mathbb{E}[\,Y\,] \;\approx\; \frac{1}{2}.
\]
\end{theorem}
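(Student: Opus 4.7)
The plan is to exploit the symmetry of the truncated normal distribution about the midpoint of its support. For a random variable $X$ obtained by truncating a normal with parameters $(\mu,\sigma)$ to an interval $[a,b]$, the classical identity is
\begin{equation*}
\mathbb{E}[X] \;=\; \mu \;+\; \sigma\,\frac{\phi(\alpha)-\phi(\beta)}{\Phi(\beta)-\Phi(\alpha)}, \qquad \alpha=\tfrac{a-\mu}{\sigma},\ \beta=\tfrac{b-\mu}{\sigma},
\end{equation*}
where $\phi$ and $\Phi$ are the standard normal density and CDF. I would first state this formula, then identify the precise content of the ``mild assumptions on the truncation intervals'' as the requirement that the underlying mean of each latent normal sits (approximately) at the midpoint of its support, i.e.\ $\mu_P\approx (1+Z)/2$ and $\mu_Y\approx 1/2$. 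This centering is the natural modelling choice since, a priori, there is no reason for the greedy search to favour any particular half of $[1,Z]$, nor for the pruning fraction to be skewed toward $0$ or $1$.

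Next I would invoke symmetry. Centering $\mu$ at the midpoint forces $\alpha=-\beta$; because $\phi$ is an even function, $\phi(\alpha)=\phi(\beta)$, so the correction term vanishes identically and we obtain the exact equalities
\begin{equation*}
\mathbb{E}[P]\;=\;\frac{1+Z}{2}, \qquad \mathbb{E}[Y]\;=\;\frac{1}{2}.
\end{equation*}
For the ``$\approx$'' version claimed in the theorem, I would add a small perturbation analysis: if $\mu=\frac{a+b}{2}+\varepsilon$ for small $\varepsilon$, a first-order Taylor expansion of $\phi(\alpha)-\phi(\beta)$ in $\varepsilon$ shows the correction term is $O(\varepsilon)$, so the stated approximations hold to first order in how far the latent mean deviates from the midpoint.

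The main conceptual obstacle is not the computation but justifying the centering assumption for $P$ and $Y$. For $P$, one argues that after ranking tokens by importance and iteratively substituting, the stopping round is determined by when the perturbed text first flips the detector's label, and in the absence of prior knowledge this round is treated as uniform-like over $[1,Z]$, well-modelled by a truncated normal centered at $(1+Z)/2$. For $Y$, the pruning step independently tests each perturbed token for reversibility, so in expectation roughly half are retained. The weak dependence assumed in Definition~\ref{def:randomvars} is exactly what allows us to treat $\mathbb{E}[P]$ and $\mathbb{E}[Y]$ separately here; the consequences of this decoupling for the full perturbation rate $\rho$ would then be picked up in the subsequent bounds on $\rho$ via $\mathbb{E}[P\cdot Y]\approx \mathbb{E}[P]\mathbb{E}[Y]+\mathrm{Cov}(P,Y)$.
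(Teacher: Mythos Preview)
Your proposal is correct and follows essentially the same skeleton as the paper: model $P$ and $Y$ as truncated normals, invoke the standard mean formula with the $\phi/\Phi$ correction term, and center the latent means at the midpoints of their supports to conclude $\mathbb{E}[P]\approx(Z+1)/2$, $\mathbb{E}[Y]\approx 1/2$.

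The differences are in how the two arguments justify the approximation. You make the symmetry cancellation explicit---when $\mu$ sits at the midpoint, $\alpha=-\beta$ and $\phi(\alpha)=\phi(\beta)$, so the correction vanishes \emph{exactly}, with a first-order Taylor argument handling small deviations. The paper instead argues more loosely that if the standardized endpoints $(1-\mu_P)/\sigma_P$ and $(Z-\mu_P)/\sigma_P$ are ``not extreme,'' truncation does not shift the mean much from $\mu_P$. Your version is sharper and actually independent of $\sigma$. On the other hand, the paper supplies a Central Limit Theorem rationale for the truncated-normal model itself (viewing $P$ and $Y$ as normalized sums of Bernoulli-like token decisions), which is absent from your heuristic ``no reason to favour either half'' justification; you may wish to add that ingredient, since it is the only substantive modelling content the theorem needs beyond the symmetry computation.
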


\begin{proof}
\textbf{(1) Truncated Normal for \(P\) and \(Y\).}  
In adversarial text attacks, \(P\) often emerges from an aggregation of (approximately) Bernoulli decisions: each of the \(T\) tokens has a non-negligible probability of being deemed “important,” subject to a global cap of \(Z\). By the Central Limit Theorem (CLT), this sum is close to normally distributed with mean \(\mu_P\) and variance \(\sigma_P^2\). Because \(P\) cannot exceed \(Z\) (and must be at least \(1\) to induce misclassification), we say
\[
  P \;\sim\; \mathcal{N}_t(\mu_P,\;\sigma_P^2;\;1,\;Z),
\]
where \(\mathcal{N}_t(\cdot)\) denotes a normal distribution truncated to the integer range \([1,Z]\). Analogously, once \(P\) tokens are selected, \emph{greedy pruning} decides to keep or revert each token, again creating a sum of i.i.d.\ Bernoulli-like indicators. Dividing by \(P\) yields
\[
  Y 
  \;=\; \frac{\text{number of retained tokens}}{P}
\]
\[
  \;\sim\; \mathcal{N}_t(\mu_Y,\;\sigma_Y^2;\;0,\;1),
\]
a (truncated) normal over \([0,1]\).

\noindent
\textbf{(2) Standard Formulas for Truncated Normal Means.}  
From truncated normal theory, if \(X\sim \mathcal{N}(\mu,\sigma^2)\) is restricted to \([a,b]\), then its mean is
\[
  \mathbb{E}[\,X\,]
  \;=\;
  \mu 
  \;+\;
  \sigma\,\frac{\phi\!\bigl(\tfrac{a-\mu}{\sigma}\bigr) - \phi\!\bigl(\tfrac{b-\mu}{\sigma}\bigr)}
                     {\Phi\!\bigl(\tfrac{b-\mu}{\sigma}\bigr) - \Phi\!\bigl(\tfrac{a-\mu}{\sigma}\bigr)},
\]
where \(\phi\) and \(\Phi\) are the standard normal PDF and CDF, respectively. Thus, for
\(
  P \sim \mathcal{N}_t(\mu_P,\sigma_P^2;\,1,Z),
\)
\(\mathbb{E}[P]\) depends on how far \(\mu_P\) is from the boundaries \(\{1,Z\}\). Similarly, \(\mathbb{E}[Y]\) depends on truncation at \([0,1]\).

\noindent
\textbf{(3) Approximate Symmetry in Practice.}  
For symmetric truncated normal distributions \( P \) and \( Y \), it is evident that the following expectations hold:  
\[
\mu_P = \frac{Z+1}{2}, \quad \mu_Y = \frac{1}{2}.
\]
So long as \(\mu_P\) and \(\sigma_P\) ensure that \(\frac{1-\mu_P}{\sigma_P}\) and \(\frac{Z-\mu_P}{\sigma_P}\) are not extreme, the truncation does not drastically shift the mean from \(\mu_P\). Concretely, if \(Z\) is sufficiently large and \(\mu_P\approx \tfrac{Z+1}{2}\), then
\[
  \mathbb{E}[\,P\,]
  \;\approx\;
  \mu_P
  \;\approx\;
  \frac{Z+1}{2},
\]
Likewise, if \(\mu_Y \approx \tfrac12\) and \(\sigma_Y\) is moderate, \(\mathbb{E}[\,Y\,]\approx \tfrac{1}{2}\) despite the boundaries \([0,1]\).

\noindent \textbf{Combining (1), (2), and (3), Theorem~\ref{thm:bounds1} is proved.}
\end{proof}

\begin{theorem}
\label{thm:bounds}
Let \( T \) be the length of the original text and \( M \) be the maximum number of tokens that can be perturbed. Then, the perturbation rate \( \rho \) satisfies \( \frac{1}{T} \leq \rho \leq \frac{Z}{T} \), and its expected value is approximately \( \mathbb{E}[\rho] \approx \frac{Z}{4T} \).
\end{theorem}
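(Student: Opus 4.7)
The plan is to derive the two-sided bounds directly from the support of $P$ and $Y$ given in Definition~\ref{def:randomvars}, and to obtain the expected value by combining Theorem~\ref{thm:bounds1} with the weak-dependence hypothesis on $(P, Y)$.

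First I would establish the deterministic bounds. By Definition~\ref{def:randomvars}, $P$ is an integer in $[1, Z]$ and $Y \in [0, 1]$, so $P \cdot Y \in [0, Z]$. The nontrivial step is the lower bound $P \cdot Y \geq 1$: the perturbation rate is defined only for a generated adversarial example, so at least one perturbed token must survive greedy pruning---otherwise the pruning loop in Algorithm~\ref{alg:algorithm3_short} would have terminated on the original (unperturbed) text, contradicting the premise that the attack succeeded. Dividing the chain $1 \leq P \cdot Y \leq Z$ by $T$ then yields $\tfrac{1}{T} \leq \rho \leq \tfrac{Z}{T}$.

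Second I would compute $\mathbb{E}[\rho]$. By linearity and the covariance identity,
\begin{equation*}
\mathbb{E}[\rho] \;=\; \frac{\mathbb{E}[P \cdot Y]}{T} \;=\; \frac{\mathbb{E}[P]\,\mathbb{E}[Y] + \mathrm{Cov}(P, Y)}{T}.
\end{equation*}
The weak-dependence assumption from Definition~\ref{def:randomvars} lets me drop $\mathrm{Cov}(P, Y)$, and Theorem~\ref{thm:bounds1} supplies $\mathbb{E}[P] \approx (Z+1)/2$ together with $\mathbb{E}[Y] \approx 1/2$. Multiplying gives $\mathbb{E}[P \cdot Y] \approx (Z+1)/4$, and for moderately large $Z$ the additive $1/4$ is absorbed into the ``$\approx$'' symbol, so $\mathbb{E}[\rho] \approx Z/(4T)$, as claimed.

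The hard part will be making the lower bound $\tfrac{1}{T}$ fully rigorous. The deterministic argument reads the theorem as conditional on having generated a successful adversarial example; if one instead allows $Y = 0$ as an admissible outcome of pruning (i.e.\ the entire perturbation turned out to be redundant), the naive lower bound drops to $0$. I would therefore phrase the result on the event $\{P \cdot Y \geq 1\}$ and argue that Algorithm~\ref{alg:algorithm3_short} enforces this event whenever the preceding greedy search produced a successful attack. A secondary concern is the error introduced by neglecting $\mathrm{Cov}(P, Y)$ and by the truncation corrections already hidden inside Theorem~\ref{thm:bounds1}; both are subsumed under the ``$\approx$'' symbol, but I would add a brief remark so the approximation is not mistaken for an exact equality.
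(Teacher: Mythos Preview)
Your proposal is correct and mirrors the paper's own proof almost exactly: the paper likewise obtains the bounds from the support of $P$ and $Y$ (invoking ``at least one token must change to induce a misclassification'' for the lower bound) and derives $\mathbb{E}[\rho]\approx Z/(4T)$ via the covariance identity, the weak-dependence assumption, and Theorem~\ref{thm:bounds1}. If anything, your treatment of the $P\cdot Y\ge 1$ issue is more careful than the paper's, which simply asserts the inequality without flagging the $Y=0$ edge case.
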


\begin{proof}
\textbf{(1) Basic Bounds.}  
Since \(P\) is at least 1 and at most \(M\), and \(Y\) is at least 0 and at most 1, the product \(P \cdot Y\) satisfies
\[
  1 \;\le\; P \;\le\; Z,
  \qquad
  0 \;\le\; Y \;\le\; 1
\]
\[
  \quad\Longrightarrow\quad
  1 \;\le\; P \cdot Y \;\le\; Z.
\]

Dividing by \(T\) yields the strict bounds
\[
  \frac{1}{T}
  \;\le\;
  \frac{P \cdot Y}{T}
  \;\le\;
  \frac{Z}{T}.
\]
The lower limit \(1/T\) reflects that at least one token must change to induce a misclassification; the upper limit \(\tfrac{Z}{T}\) follows from the maximal \(Z\) token modifications.

\noindent
\textbf{(2) Expected Product with Weak Dependence.}  
By definition, the covariance between \(P\) and \(Y\) is
\[
  \mathbb{E}[\,P\,Y\,]
  \;=\;
  \mathbb{E}[\,P\,]\;\mathbb{E}[\,Y\,]
  \;+\;
  \mathrm{Cov}(P, Y).
\]
When the detector is relatively robust, \emph{greedy search} skews \(P\) toward larger values (close to \(M\)), and pruning skews \(Y\) toward retention (close to 1). In that case, \(\mathbb{E}[\,P\,]\) is large, \(\mathbb{E}[\,Y\,]\) is near 1, and a small positive covariance implies
\[
  \mathbb{E}[\,P \cdot Y\,]
  \;>\;
  \mathbb{E}[\,P\,]\;\mathbb{E}[\,Y\,].
\]
However, the final mean number of changed tokens cannot exceed \(M\). Conversely, if the detector is weak, \(\mathbb{E}[\,P\,]\) may approach 1, and \(\mathbb{E}[\,Y\,]\) might be relatively modest, implying 
\[
  \mathbb{E}[\,P\,Y\,]
  \;<\;
  \mathbb{E}[\,P\,]\;\mathbb{E}[\,Y\,]
  \;+\;
  |\mathrm{Cov}(P, Y)|.
\]
In all cases, the weak dependence ensures \(\mathrm{Cov}(P, Y)\) is bounded such that 
\[
  1
  \;\le\;
  P \cdot Y
  \;\le\;
  Z,
\]
preventing the expected perturbation rate from falling below \(\tfrac{1}{T}\) or above \(\tfrac{Z}{T}\).

\noindent
\textbf{(3) Characteristic Mean \(\frac{Z}{4T}\).}  
By Theorem~\ref{thm:bounds1}, we have \(\mathbb{E}[\,P\,]\approx \frac{Z+1}{2}\) and \(\mathbb{E}[\,Y\,]\approx \frac{1}{2}\). If \(\mathrm{Cov}(P, Y)\) is small or near zero, then
\[
  \mathbb{E}[\,P \cdot Y\,]
  \;=\;
  \mathbb{E}[\,P\,]\;\mathbb{E}[\,Y\,]
  \;+\;
  \mathrm{Cov}(P, Y)
\]
\[
  \;\approx\;
  \frac{Z+1}{2}\,\times\,\frac{1}{2}
  \;=\;
  \frac{Z+1}{4}.
\]
For large \(Z\), \(\frac{Z+1}{4T}\approx \frac{Z}{4T}\). If \(\mathrm{Cov}(P, Y)\) modestly raises or lowers this sum, the final mean \(\mathbb{E}[\,P\,Y\,]\) still cannot breach the fundamental \(\bigl[1,\,Z\bigr]\) interval. This shows that \(\frac{Z}{4T}\) is a natural approximate pivot for the average perturbation rate under moderate parameters. 

In Section \ref{sec:GeneratorResult}, we set \( Z = 0.3T \). If the number of iterations exceeds this upper bound, the attack is considered a failure. Therefore, the expected perturbation rate is\[
\frac{0.3T}{4T} \approx 7.5\%,\] which is very close to the result obtained in the Section \ref{sec:GeneratorResult} (7.26\%).

\noindent \textbf{Combining (1), (2), and (3), Theorem \ref{thm:bounds} is proved.}
\end{proof}

Note that if the detector forces \emph{greedy search} to repeatedly fail early (producing a right-skewed \(P\)-distribution concentrated near \(Z\)) and \emph{greedy pruning} is left-skewed (so that \(\mathbb{E}[\,Y\,]\) is close to 1), then the mean \(\mathbb{E}[\,P \cdot Y\,]\) exceeds \(\tfrac{Z+1}{4}\), but still cannot exceed \(Z\). Conversely, if \emph{greedy search} finds success quickly, giving a left-skewed \(P\)-distribution near 1, then \(\mathbb{E}[\,P\,]\) might drop below \(\tfrac{Z+1}{2}\) but never below 1; similarly, if \emph{pruning} is so aggressive that \(\mathbb{E}[\,Y\,]\) is significantly below \(\tfrac{1}{2}\), the mean also decreases. Consequently, the expected perturbation rate remains strictly within the \(\bigl[\tfrac{1}{T},\;\tfrac{Z}{T}\bigr]\) interval, but its exact value depends on the interplay of these skewed distributions. For moderate skewness on both \(P\) and \(Y\), \(\tfrac{Z}{4T}\) is a characteristic outcome.

\subsection{Query Complexity Analysis}

\noindent
\begin{definition}
\label{def:queries}
Let \(Q_{\mathrm{G}}\) be the total number of queries made by \emph{greedy search} and \(Q_{\mathrm{P}}\) be the total number made by \emph{greedy pruning}. Define the overall \emph{query complexity}:
\[
  Q \;=\; Q_{\mathrm{G}} + Q_{\mathrm{P}}.
\]
\end{definition}

\begin{theorem}
\label{thm:query_complexity_bounds}
Let \(T\) be the length of the text, and let \(Z = 0.3\,T\) be the maximum iteration count for both \emph{greedy search} and \emph{greedy pruning}. Then the total number of queries \(Q\) used to construct one adversarial example lies within \(1 \leq Q \leq 2Z\), which implies \(Q = O(T)\) and \(Q = \Omega(1)\). Moreover, if the \emph{average} perturbation rate is relatively low, then the number of required iterations typically shrinks, resulting in a correspondingly smaller \(Q\).
\end{theorem}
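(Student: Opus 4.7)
The plan is to decompose the total query count as $Q = Q_{\mathrm{G}} + Q_{\mathrm{P}}$, bound each summand separately by a careful inspection of Algorithms~\ref{alg:algorithm2_short} and~\ref{alg:algorithm3_short}, and then combine the two bounds. The lower bound is almost immediate: \emph{greedy search} must issue at least one query (to verify that the first substitution already flips the label), so $Q_{\mathrm{G}} \geq 1$, while \emph{greedy pruning} may in principle issue zero queries (if no perturbed tokens remain after search, though by construction at least one must), giving $Q \geq 1$.

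For the upper bound, I would walk through Algorithm~\ref{alg:algorithm2_short} and observe that the \texttt{while}-loop executes at most $\mathit{round}_{\max} = Z$ iterations, each contributing exactly one query to $\mathcal{M}_{\mathit{tar}}$; hence $Q_{\mathrm{G}} \leq Z$. For Algorithm~\ref{alg:algorithm3_short}, the \texttt{for}-loop iterates over the set of perturbed tokens in $\tilde{X}$, whose cardinality is precisely $P$ (the random variable from Definition~\ref{def:randomvars}). Since $P \leq Z$ deterministically, and each loop iteration contributes exactly one query, $Q_{\mathrm{P}} \leq P \leq Z$. Summing the two gives $1 \leq Q \leq 2Z$, and substituting $Z = 0.3\,T$ directly yields $Q = O(T)$ and $Q = \Omega(1)$.

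For the ``moreover'' clause linking $Q$ to the perturbation rate, the idea is to recognize that the number of perturbed tokens retained after pruning is $P\cdot Y$, which by Definition~\ref{def:perturbation_rate} equals $\rho\,T$. A low average perturbation rate $\mathbb{E}[\rho]$ therefore implies a small $\mathbb{E}[P\cdot Y]$, which in turn (since $Y \in [0,1]$) forces $\mathbb{E}[P]$ to be small on typical runs. Because $Q_{\mathrm{G}}$ coincides with the iteration index at which \emph{greedy search} first flips the label (bounded above by $P$), and because $Q_{\mathrm{P}} \leq P$, one obtains $\mathbb{E}[Q] \leq 2\,\mathbb{E}[P]$, so a small expected perturbation count translates into a correspondingly small expected query budget.

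The main obstacle I anticipate is not the deterministic two-sided bound, which is essentially a line-count of the two algorithms, but rather making the ``typically shrinks'' claim precise without overclaiming. In particular, $Q_{\mathrm{G}}$ is tied to the first-success stopping time of greedy search, not to $P$ itself, so I would need to argue that early success in search and a short retained-token list after pruning are coupled: if the detector is fragile enough that few tokens must change, then search terminates quickly, and pruning has few candidates to test. I would formalize this by replacing ``$P$'' in the greedy-search bound with the stopping-time variable and invoking the weak-dependence assumption between $P$ and $Y$ from Definition~\ref{def:randomvars}, paralleling the covariance argument used in the proof of Theorem~\ref{thm:bounds}; this keeps the statement rigorous without requiring any new probabilistic machinery beyond what has already been introduced.
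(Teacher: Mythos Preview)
Your proposal is correct and follows essentially the same decomposition as the paper: split $Q = Q_{\mathrm{G}} + Q_{\mathrm{P}}$, bound each summand by $Z$ via the loop counts in Algorithms~\ref{alg:algorithm2_short} and~\ref{alg:algorithm3_short}, and then argue informally that a low perturbation rate forces both stages to terminate early. One small clarification: the ``obstacle'' you anticipate is not really one, because in Algorithm~\ref{alg:algorithm2_short} exactly one token is perturbed and one query is issued per iteration, so the stopping time of greedy search and the variable $P$ of Definition~\ref{def:randomvars} coincide identically, making your bound $\mathbb{E}[Q] \le 2\,\mathbb{E}[P]$ immediate rather than something requiring the weak-dependence machinery.
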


\begin{proof}
\textbf{(1) Basic Range of Queries.}  
The minimal number of queries is \(1\), occurring if \emph{greedy search} succeeds in its very first attempt, requiring no \emph{greedy pruning}. Conversely, if both \emph{greedy search} and \emph{greedy pruning} use their maximum of \(Z\) single-query iterations, we have
\[
  Q = Q_{\mathrm{G}} + Q_{\mathrm{P}}
  \;\le\; Z + Z
  \;=\; 2\,Z.
\]
Since \(Z=0.3\,T\), \(Q\le 2\,Z\) translates to \(Q = O(T)\). The trivial lower bound of 1 implies \(Q=\Omega(1)\).

\noindent
\textbf{(2) Perturbation Rate and Query Trade-Off.}  
Let \(\rho\) denote the fraction of tokens altered in a final adversarial example, as described in the perturbation rate analysis. A \emph{smaller} \(\rho\) typically indicates that the detector is fooled by changing fewer tokens, implying fewer iterative steps in \emph{greedy search}. Thus, a lower \(\rho\) correlates with \emph{fewer} queries: once the necessary (small) set of tokens is found, misclassification is often achieved without exhausting all \(Z\) iterations. On the other hand, a higher \(\rho\) suggests more alterations, potentially requiring more query rounds to finalize a successful attack.

\noindent
\textbf{(3) Expected Complexity under Moderate Perturbation.}  
From the previous analysis, the average perturbation rate can hover around \(\tfrac{Z}{4T}\) under typical conditions. This moderate \(\rho\) implies that \emph{greedy search} rarely needs the full \(Z\) steps to identify the required modifications, and \emph{greedy pruning} likewise terminates without iterating over all \(Z\). Consequently, the \emph{expected} \(Q\) is substantially below the worst-case \(2\,Z\). Formally, we have
\[
  \mathbb{E}[\,Q_{\mathrm{G}}\,] \;<\; Z,
  \quad
  \mathbb{E}[\,Q_{\mathrm{P}}\,] \;<\; Z,
\]
\[
  \Longrightarrow\quad
  \mathbb{E}[\,Q\,] \;=\; \mathbb{E}[\,Q_{\mathrm{G}}\,] + \mathbb{E}[\,Q_{\mathrm{P}}\,]
  \;<\; 2\,Z.
\]

thus still preserving \(O(T)\) upper complexity while being strictly smaller on average. 

\noindent \textbf{Combining (1), (2), and (3), Theorem \ref{thm:query_complexity_bounds} is proved.}
\end{proof} 
\section{Adversarial Training Process}
\label{appd: pseudo}
Algorithm \ref{alg:algorithm1_short} shows the detailed optimization process of \attackname and \defensename. The two components are updated alternatively in the same training step.

\begin{algorithm}[h]
\small
\caption{Adversarial Training Procedure}
\label{alg:algorithm1_short}
\begin{algorithmic}[1]
        \State \textbf{Input:} Training set $D_{train}$, surrogate detector $\mathcal{M}_{sur}$.

    \begin{center}
            \textit{**** training phase begins ****}
        \end{center}
    \State \textbf{Initialize:} target detector $\mathcal{M}_{tar}$, importance scoring network $\mathcal{F}_{\theta}$ and $epoch \leftarrow 0$.
    \While{$epoch< epoch_{max}$}
       \For{each batch samples $\{X_i, c_i\}_{i=0}^{N}$ in $D_{train}$}
       \State $D_{adv}\leftarrow \{\}$
       \For{each MGT in $\{X_i, c_i\}_{i=0}^{N}$}
       \State Obtain the last layer hidden state by Eq.\eqref{eq:1}.
       \State Obtain the importance score by Eq.\eqref{eq:2}.
       \State Construct the Important-token Set $\mathbf{I}$ by Eq.\eqref{eq:3}.
       \State Execute the Greedy Search Algorithm\ref{alg:algorithm2_short}.
       \State Execute the Greedy Pruning Algorithm\ref{alg:algorithm3_short}.
       \State{Get the adversarial example:}
       \begin{center}
            $D_{adv} \gets D_{adv} \cup \{X_i, c_i\}_{i=0}^{N}$.
        \end{center}
       \EndFor
       \State $D_{adv} \gets D_{adv} \cup \{\tilde{X}_i, c_i\}$.
       \State Classify $D_{adv}$ via $\mathcal{M}_{tar}$ and obtain the output.
       \State Calculate loss $\mathcal{L}_{\text{A}}$ by Eq.\eqref{eq:14}.
       \State Calculate loss $\mathcal{L}_{\text{D}}$ by Eq.\eqref{eq:9}.
       \State Update $\mathcal{M}_{sar}$ and $\mathcal{F}_{\theta}$ via SGD \shortcite{robbins1951stochastic}.
       \EndFor 
    \EndWhile
    \State \textbf{Output:} Trained detector $\mathcal{M}_{tar}$ and $\mathcal{F}_{\theta}$.
\end{algorithmic}
\end{algorithm}

\section{Case Study}

Table \ref{tab:case_study_results} presents a case study of our adversary \attackname, in which our approach \attackname outperforms other SOTA methods regarding semantic preservation and reduction in perturbation rate.

\begin{table*}[!htbp]
\centering
\renewcommand\arraystretch{1.4}
\resizebox{1\textwidth}{!}{

\begin{tabular}{lm{0.6\textwidth}>{\centering\arraybackslash}m{0.2\textwidth}>{\centering\arraybackslash}m{0.08\textwidth}>{\centering\arraybackslash}m{0.08\textwidth}}
\hline
\textbf{Method} & \textbf{Text} & \textbf{Result} & \textbf{Pert. (\%)} & \textbf{Queries} \\ 
\hline

\textbf{Original MGT} & Suggested by the Scottish Parliamentary Constituencies Commission in 2003, and adopted at Holyrood on 1 May 2004. Area of Scotland's 32nd largest council area - covering parts of East Renfrewshire Council Area & \makecell{Machine-written \\ \includegraphics[width=0.5cm]{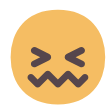}} & - & - \\

\hline

\textbf{PWWS} & Suggested by the Scottish Parliamentary Constituencies Commission in 2003, and \textcolor{red}{adoptions} at Holyrood on 1 \textcolor{red}{Probability} 2004. Area of Scotland's 32nd \textcolor{red}{highest} council area - covering \textcolor{red}{item} of East Renfrewshire Council Area & Human-written (Succeeded) \makebox[-1pt][l]{\includegraphics[width=0.5cm]{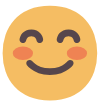}} & 11.76 & 107 \\

\hline

\textbf{TextFooler} & Suggested by the Scottish Parliamentary Constituencies Commission in 2003, and adopted at Holyrood on 1 May 2004. Area of Scotland's 32nd largest council area - covering parts of East Renfrewshire Council Area & Machine-written (Failed) \makebox[-1pt][l]{\includegraphics[width=0.5cm]{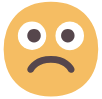}} & - & - \\

\hline

\textbf{BERTAttack} & Suggested by the Scottish \textcolor{red}{rural} Constituencies Commission in 2003, and \textcolor{red}{abolished} at Holyrood on 1 May 2004. Area of \textcolor{red}{ward} 32nd \textcolor{red}{most} council \textcolor{red}{constituency} - \textcolor{red}{almost} \textcolor{red}{all} of East Renfrewshire Council Area & Human-written (Succeeded) \makebox[-1pt][l]{\includegraphics[width=0.5cm]{genshin/smiley.png}} & 20.59 & 60 \\

\hline

\textbf{A2T} & Suggested by the Scottish Parliamentary Constituencies Commission in 2003, and adopted at Holyrood on 1 May 2004. Area of Scotland's 32nd largest council area - covering parts of East Renfrewshire Council Area & Machine-written (Failed) \makebox[-1pt][l]{\includegraphics[width=0.5cm]{genshin/frownie.png}} & - & - \\

\hline

\textbf{FastTextDodger} & Suggested by the Scottish Parliamentary Constituencies Commission in 2003\textcolor{red}{:} and adopted at Holyrood on 1 May 2004. Area \textcolor{red}{of' 's} 32nd largest council area - covering parts of East \textcolor{red}{council Council} & Human-written (Succeeded) \makebox[-1pt][l]{\includegraphics[width=0.5cm]{genshin/smiley.png}} & 14.71 & 73 \\

\hline

\textbf{ABP} & Suggested \textcolor{red}{aside} the Scottish Parliamentary Constituencies Commission \textcolor{red}{indium} 2003, and adopted \textcolor{red}{atomic number 85} Holyrood on \textcolor{red}{ace Crataegus oxycantha} 2004. Area of Scotland's 32nd largest council area - covering parts of East Renfrewshire Council Area & Human-written (Succeeded) \makebox[-1pt][l]{\includegraphics[width=0.5cm]{genshin/smiley.png}} & 23.53 & 71 \\

\hline

\textbf{HQA} & Suggested by the Scottish Parliamentary Constituencies Commission in 2003, and adopted at Holyrood on 1 May 2004. Area of Scotland's \textcolor{red}{2004. sphere council} area \textcolor{red}{thirty-second prominent council} of East Renfrewshire Council Area & Human-written (Succeeded) \makebox[-1pt][l]{\includegraphics[width=0.5cm]{genshin/smiley.png}} & 20.59 & 44 \\

\hline

\textbf{T-PGD} & \textcolor{red}{Introduced. by.} Scottish Parliamentary Resituiances Commission in 2002, and adopted at Holyrood on 1 May 2004. \textcolor{red}{All of Scotland's 32 The largest councils} area - covering parts of East Renfrewshire Council Area & Human-written (Succeeded) \makebox[-1pt][l]{\includegraphics[width=0.5cm]{genshin/smiley.png}} & 26.47 & 56 \\

\hline

\textbf{\attackname} & Suggested by the \textcolor{red}{Grave} Parliamentary Constituencies Commission in 2003, and adopted at Holyrood on 1 May 2004. Area Scotland 32nd \textcolor{red}{biggest} council area - covering parts of East Renfrewshire Council Area & Human-written (Succeeded) \makebox[-1pt][l]{\includegraphics[width=0.5cm]{genshin/smiley.png}} & 5.88 & 12 \\

\hline

\end{tabular}
}
\caption{Case study of semantic preservation of the adversarial texts generated by various attack methods. Words modified during the attacks are highlighted in \textcolor{red}{red}.}
\label{tab:case_study_results}
\end{table*}

\end{document}